\newtheorem{theorem}{Theorem}
\newtheorem{lemma}{Lemma}
\title{Approximation algorithms for Capacitated Facility Location Problem with Penalties}
\author{Neelima Gupta \footnote{ Department of Computer Science, University of Delhi, India, ngupta@cs.du.ac.in}
       \and Shubham Gupta, \footnote{Department of Computer Science and Engineering, IIT Kharagpur, India, shubham.gupta@cse.iitkgp.ernet.in}}
\begin{document}
\maketitle

\begin{abstract}
In this paper, we address the problem of capacitated facility location problem
with penalties (CapFLPP) paid per unit of unserved demand. In case of uncapacitated FLP with penalties
demands of a client are either entirely met or are entirely rejected and penalty is paid. In the uncapacitated case, there is no reason to serve a client partially. Whereas, in case of CapFLPP, it may be beneficial to serve a client partially instead of not serving at all and,  pay the penalty for the unmet demand. Charikar et. al.~\cite{charikar2001algorithms}, Jain et. al.~\cite{jain2003greedy} and Xu- Xu~\cite{xu2009improved} gave $3$, $2$ and 
$1.8526$ approximation, respectively, for the uncapacitated case . We present $(5.83 + \epsilon)$ factor for the case of uniform capacities and $(8.532 + \epsilon)$ factor for non-uniform capacities. 
\end{abstract}

\section{Introduction}

 We consider the  capacitated facility location problem with penalties(CapFLPP). We are given a set $C$ of clients  and a set $F$ of facilities .  Each facility $i$ has a facility opening cost $f_i$ and capacity $u_i$, which is the maximum amount of demand it can serve. Each client $j$ has some demand $d_j$ to be served by open facilities and penalty $p_j$ paid per unit for the unserved demand. $c_{ij}$ is the cost of serving one unit of  demand of $j$ by facility $i$. A facility can serve a demand only if it is opened. Service costs $c_{ij}$ are metric. Demands are splittable i.e. the demands of a client may be served by different facilities. Objective is to select a subset of facilities to open, select and assign the selected demands to open facilities such that the facility capacities are not violated and the total cost of opening the facilities plus the assignment cost plus the penalties paid by the unserved demands is minimized. Here the penalties are paid per unit of unserved demand. This is in contrast to the notion of penalties used in the previous work (see~\cite{charikar2001algorithms,jain2003greedy,xu2009improved} for facility location with penalties and $k$ - median with penalties, ~\cite{goemans1995general, bienstock1993note}  for prize-collecting TSP and steiner tree problem and ~\cite{hajiaghayi2010prize} for Prize Collecting Steiner Network Problems.)
 In~\cite{charikar2001algorithms} Charikar et al. gave a $3$ factor algorithm for the uncapacitated FLPP (UFLPP) wherein a client is either served completely or it is not served at all. This was improved to factor $2$ in~\cite{jain2003greedy}. The best known factor of $1.8526$ was given by Xu and Xu in~\cite{xu2009improved}. 
 In case there are no capacities on the facilities, there is no reason why a client should be served partially. But in case of capacities, it may be beneficial
  to serve a client partially, to the extent the capacities permit, instead of not serving at all, in case the penalty of not servicing is high. Consider a car manufacturing company with a target to churn out $N$ cars per day. A shortfall of $1$ car costs the company Rs. $p$. Also, consider a warehouse that supplies a particular part to the company. If the warehouse is not able to meet the demand of the company on some day, company charges the owner of the warehouse a penalty per unit of the cars that are delayed due to this. Here, facilities are a collection of warehouses and clients are the manufacturing companies.
 
 The work on UFLPP extends the lp- based techniques like LP-rounding, primal-dual and dual-fitting, used to approximate UFLP (without penalties)  to incorporate penalties.  For example Xu and Xu use the primal-dual technique of~\cite{charikar2001algorithms} followed by greedy local search to handle penalties. As the integrality gap of the natural LP-relaxation for the capacitated FLP(CapFLP)  is arbitrarily large, we extend the local search technique of Chudak-Williamson (henceforth referred to as CW)~\cite{chudak} for uniform capacities and Pal-Tardos-Wexler (henceforth referred to as PTW)~\cite{paltree} in case the capacities are not uniform. We present $(6+\epsilon)$ factor for Uniform Facility Location Problem with Penalties (UnifFLPP) which leads to $(5.83 + \epsilon)$ after scaling and $(9+\epsilon)$ factor for Non-uniform Facility Location Problem with Penalties (CapFLPP) which leads to $(8.532 + \epsilon)$ after scaling.
 
 One may think of adding another dummy facility say $N$ with zero facility opening cost, large capacity ($\sum_j d_j$) and $p_j$ as the service cost of serving client $j$ from $N$ and then applying CW/PTW. But this cannot be done as the penalties are not metric. We present the following two results:
 
  \begin{theorem} 
  \label{theo-unif}
  There is a polynomial time local search algorithm that provides a locally optimal solution of cost at most $(5.83 + \epsilon)$ times the optimum cost for uniform capacity facility location problem with penalties (UnifFLPP).
 \end{theorem}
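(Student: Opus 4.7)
The plan is to adapt the Chudak--Williamson local search for uniform-capacity FLP so that the status of each client (\emph{served} vs.\ \emph{penalized}) becomes part of the feasible state, and then to derive a $(6+\epsilon)$ bound by a standard reassignment argument against an optimal solution, which scaling improves to $(5.83+\epsilon)$. Concretely, I would maintain a triple $(S,P,\sigma)$ where $S\subseteq F$ is the open set, $P\subseteq C\times\mathbb{R}_{\ge 0}$ records, for each client $j$, the unserved amount $r_j\in[0,d_j]$ on which penalty $p_j$ is paid, and $\sigma$ is an optimal (min-cost) assignment of the remaining demand $d_j-r_j$ to $S$ under the uniform capacity $u$. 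Beyond the CW operations \textbf{add}, \textbf{drop}, and \textbf{swap} on facilities, I would introduce penalty operations that transfer a chunk of demand of a client between the ``served'' and ``penalized'' states, with the assignment $\sigma$ recomputed optimally after each move. As usual, $\epsilon/\mathrm{poly}(n)$-thresholded improving moves keep the running time polynomial.

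Next I would write down the local-optimality inequalities these operations yield. For the add/drop/swap moves I would recover the CW-style inequalities involving the facility, service, and reassignment costs; the new penalty moves give the extra inequality that for every client $j$ and every facility $i\in S$, shifting an infinitesimal unit of $j$'s demand from ``penalized'' to ``served via $i$'' (routing by bumping existing flow optimally) does not decrease cost, and symmetrically, shifting served demand to penalized does not decrease cost. Together these tie the penalty rates to shadow assignment costs and so let penalized clients be accounted for in the same framework as served clients.

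For the approximation analysis, I would fix an optimal solution $O$ with open set $S^\ast$, assignment $\sigma^\ast$, and penalty set $P^\ast$, and partition the clients of the local optimum into (i) those served in both, (ii) those penalized in both, (iii) those served locally but penalized in $O$, and (iv) those penalized locally but served in $O$. For categories (i) and (iv) I would use the CW reassignment argument (routing local demand through $\sigma^\ast$, bounding service and facility costs in terms of $f(S^\ast)$ and $c(\sigma^\ast)$); for (ii) the local penalty is at most the optimum penalty; for (iii) the penalty local-optimality inequality shows the local service cost is competitive against paying the optimum penalty. Summing gives $\mathrm{cost}(\mathrm{LOC})\le 6\cdot\mathrm{OPT}+\epsilon$, and then scaling facility opening costs by a constant $\lambda>1$ before running local search and charging the excess facility-cost slack against the tight CW bound yields $(5.83+\epsilon)$, exactly as in the penalty-free uniform case.

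The main obstacle is the reassignment step for category (iii), since penalties are not metric and the naive ``dummy facility'' reduction fails for exactly this reason. The trick is to avoid rerouting such a client's demand through $S^\ast$ and instead invoke the penalty local-optimality inequality directly: because at the local optimum it is not profitable to switch $j$ from served to penalized, the local service cost of $j$ is bounded by $p_j d_j$ plus a controllable shadow-assignment term, which can then be absorbed into the optimum penalty contribution. Making this bookkeeping tight, so that categories (iii) and (iv) do not double-charge against $f(S^\ast)$ or $c(\sigma^\ast)$, is where I expect most of the work to lie.
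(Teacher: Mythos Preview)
Your proposal correctly identifies the central obstruction (penalties are non-metric, so the dummy-facility reduction fails) and the overall target ($6\to 5.83$ via scaling), but the plan has a real gap in the part that actually carries the CW analysis: the \emph{facility cost} bound via the heavy/light transshipment arguments.

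Your four-category client partition and the ``penalty local-optimality'' inequality are doing the job of the easy half, the analogue of CW's add-based service-cost bound. Indeed the paper's Lemma on $c_s(S)+c_p(S)\le c(S^*)$ is proved exactly by splitting clients as you do and using that $s_j\le p_j$ for clients served locally but penalized optimally. (Incidentally, you do not need separate ``penalty operations'': if assignment is recomputed by min-cost flow \emph{with penalties} after each move, these inequalities hold automatically.) But the hard half of CW is bounding $c_f(S\setminus S^*)$ via swap/delete: one must close each $s\in S\setminus S^*$ and reroute \emph{all} of its currently served demand. A single facility $s$ carries clients from several of your categories simultaneously, and for the category-(iii) portion there is no $\sigma^*$-path to follow. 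Your suggestion to ``invoke penalty local-optimality directly'' gives $s_j\le p_j d_j$, which bounds the local service cost of such a client; it does not bound the \emph{increase} incurred when $s$ closes, nor does it tell you how to free capacity in the light-facility case, where CW crucially needs $N_s+w(Tr(s,\cdot))\ge U/2$ --- an inequality that can fail once some of $s$'s flow goes to penalty instead of to $S\cap S^*$.

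The paper handles this not by new operations or by per-client case analysis, but by modifying the path decomposition itself. It introduces two distinct dummy nodes $N$ (for penalties in $S$) and $N^*$ (for penalties in $S^*$); because $N$ has no incoming edge and $N^*$ no outgoing edge, neither can be an internal vertex of any path, so triangle inequality is never applied across a penalty edge. This yields, in addition to swap paths and transfer paths, a third class of \emph{penalty paths} $Pen(s)$ from $s\in S$ to $N^*$. When closing $s$, the demand on $Pen(s)$ is pushed along those paths to the last real facility $s'$ before $N^*$, and room is made at $s'$ by penalizing the corresponding client there; the cost is bounded by $v(Pen(s))$. For light facilities, $\theta(s)$ is redefined to use $Pen(s)$ when $N_s+w(Tr(s,\cdot))<U/2$. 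With these changes the heavy and light transshipment lemmas go through with extra $v(Pen(\cdot))$ terms, giving $c_f(S\setminus S^*)\le 3c_f(S^*\setminus S)+2(c_s(S)+c_p(S)+c_s(S^*)+c_p(S^*))$ and hence $c(S)\le 6c_f(S^*)+5c_s(S^*)+5c_p(S^*)$. Your write-up should make this rerouting mechanism explicit; the per-client inequalities you describe will not by themselves repair the light-facility step.
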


 \begin{theorem} 
\label{theo-nonunif}
 There is a polynomial time local search algorithm that provides a locally optimal solution of cost at most $(8. 532 + \epsilon)$ times the optimum cost for nonuniform capacity facility location problem with penalties (CapFLPP).
 \end{theorem}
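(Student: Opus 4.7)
The plan is to mimic the Pál--Tardos--Wexler (PTW) local search for CapFLP and graft on the penalty mechanism in the same spirit as the uniform case, obtaining a $(9+\epsilon)$ guarantee which we will then tighten to $8.532+\epsilon$ by the standard cost--scaling trick. A locally optimal solution is specified by the pair $(S,A)$, where $S\subseteq F$ is the set of open facilities and $A$ is a fractional assignment of client demand to $S$; the \emph{penalty set} $P$ is implicit, consisting of those units of demand that are not assigned. Since adding a dummy facility with service cost $p_j$ is illegal (the $p_j$ need not be metric), the penalty cost must be carried as an explicit term through the entire analysis.

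First I would fix the local moves: (i) open a facility, (ii) close a facility together with a feasible reassignment of its demand (possibly routing some of it to the penalty set), (iii) open one facility while closing a set of facilities and rerouting demand, i.e.\ the PTW multi-swap, and (iv) toggle moves that turn penalty-paying demand into served demand (by using residual capacity of $S$) or vice versa. The $\epsilon$ in the statement comes, as usual, from only accepting moves that improve the cost by more than an $\epsilon/\text{poly}$ factor, which bounds the number of iterations by a polynomial.

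The heart of the argument is the analysis of a local optimum $(S,A)$ against the global optimum $(S^*,A^*)$ with penalty set $P^*$. Following PTW, I would construct a set of \emph{test swaps} indexed by facilities of $S^*$: each $i^*\in S^*$ is paired with a carefully chosen subset of $S$ so that the aggregate swaps cover every client exactly once on the ``open $i^*$ and close some $i$'' side. For clients served in both solutions the PTW reassignment charges service-cost increases of at most $c_{ij}+c_{i^*j}$ plus $O(1)$ times the capacitated facility costs, exactly reproducing the $9$ bound on the sum of service and facility costs. The new content is to handle the four strata of clients $C\cap C^*$, $C\cap P^*$, $P\cap C^*$, and $P\cap P^*$: units in $P\cap C^*$ are brought into service by a toggle move whose non-improvement gives $p_j\le f_{i^*(j)}/u_{i^*(j)}+c_{i^*j}$, while units in $C\cap P^*$ are simply moved to the penalty set, which bounds their current service charge by $p_j$ (the bound the optimum pays anyway). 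The units in $P\cap P^*$ contribute identically to both costs and cancel out. Assembling these inequalities yields $\mathrm{cost}(S,A)\le 9\cdot\mathrm{OPT}+\epsilon$.

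The main obstacle, and the step that requires the most care, is the final scaling that drops $9$ to $8.532$: one scales each facility cost $f_i$ by a factor $\delta>1$ before running local search, so facility cost is over-penalised during the search but the service/penalty bounds stay intact. Solving the resulting two-variable LP over $\delta$ and the PTW coefficients (as is done for plain non-uniform CapFLP) yields the optimal $\delta$ and the $8.532$ constant. The delicate point is that penalty terms must be correctly weighted in this LP so that the toggle inequalities combine with the scaled facility and service inequalities; checking that $P\cap P^*$ truly cancels and that the toggle bound on $P\cap C^*$ does not degrade under scaling is where I expect the calculation to be most error-prone. Everything else is a faithful lifting of the PTW machinery, and the polynomial running time follows from the $\epsilon$-improvement threshold exactly as in~\cite{paltree}.
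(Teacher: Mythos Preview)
Your high-level plan---extend PTW, get $9+\epsilon$, then scale---matches the paper, but two of the key steps are genuinely incorrect or missing, not just sketchy.

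\medskip
\textbf{The toggle inequality does not hold.} You write that for a unit $j\in P\cap C^*$ the non-improvement of a toggle move gives $p_j\le f_{i^*(j)}/u_{i^*(j)}+c_{i^*j}$. In a capacitated setting this does not follow from local optimality: the facility $i^*(j)$ need not be open in $S$, so bringing that single unit into service costs the full $f_{i^*(j)}$, not an amortised $f_{i^*(j)}/u_{i^*(j)}$. The correct way to get a per-unit bound is to consider $\mathit{add}(t)$ for every $t\in S^*$ simultaneously and sum; this is exactly the paper's Lemma~\ref{lem-servp}, which yields $c_s(S)+c_p(S)\le c(S^*)$ but says nothing about facility cost. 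Your per-unit toggle inequality is simply false and cannot play the role you assign it.

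\medskip
\textbf{The facility-cost bound needs a mechanism you do not have.} When you close $s\in S\setminus S^*$ you must reassign \emph{all} of $x(s,\cdot)$. You say ``possibly routing some of it to the penalty set,'' but how do you bound that penalty against $c_p(S^*)$? The units you are kicking out are clients of $s$, and $S^*$ may well \emph{serve} them, so you cannot charge their penalty to $c_p(S^*)$ directly. The paper resolves this with a specific device: in the path decomposition it introduces \emph{separate} dummy facilities $N$ (for $S$) and $N^*$ (for $S^*$), so that $N,N^*$ are never internal vertices on any path (crucial, since penalties are non-metric). A \emph{penalty path} from $s$ ends at $N^*$ via some last real facility $s'\in S$ and a client $j$ of $s'$ that \emph{is} penalised in $S^*$. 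The modified $\mathit{close}(s,T)$ then reroutes demand of $s$ to $s'$ and makes room at $s'$ by having $s'$ pay penalty for $j$; that penalty is now legitimately charged to $c_p(S^*)$. The transshipment problem is set up with nodes $N^*_{s',j}$ and edges of cost $c_{ss'}+p_j$, and the whole PTW tree analysis is carried out on that enlarged graph. Nothing in your outline supplies this indirection, and without it the close/open inequalities do not add up to a bound involving $c_p(S^*)$.

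\medskip
The scaling step to $8.532$ is indeed standard once the $9c_f(S^*)+5c_s(S^*)+5c_p(S^*)$ bound is in hand; the penalty term rides along with the service term and causes no extra trouble there. The real work is upstream.
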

  
 The remainder of the paper is organized as follows: In section 2 we present our result for the uniform capacities. We first briefly review the non-penalty UnifFLP of Chudak-Williamson which we extend to incorporate penalties. Section 3 presents the results for the case of non-uniform capacities. Again we briefly review the non-penalty CapFLP of  Pal-Tardo-Wexler  which is extended to handle penalties. Finally, section 4 concludes the paper by throwing some light on the future work to improve the factor by applying our technique to other (better) algorithms for UnifFLP and CapFLP.  
 
 \section{Uniform Capacities}
 In this section, we consider the facility location problem with penalties when the capacities of the facilities are uniform {\em i.e.} $u_i = U ~\forall ~i \in F$. We first briefly review the local search algorithm of Chudak-Williamson\cite{chudak} and then show how it can be extended to integrate penalties. Operations for UnifFLPP largely remain the same as that in\cite{chudak}  except that we use the min-cost flow with penalties, instead of min-cost flow, to assign the demands to a given set of facilities. 
 
 \label{unif}
\subsection{Preliminaries and Previous Work}
For the capacitated facility location problem, given a subset $S\subseteq F$, the optimal assignment of clients to the set of facilities in $S$ can be computed by solving a mincost flow (MCF) problem. Therefore to solve the problem, we only need to determine a good subset $S\subseteq F$ of facilities to open.  With the abuse of notation,  $S$ denotes both the solution as well as the set of facilities opened in the solution. Cost of solution $S$ is denoted by $c(S) = c_f(S) + c_s(S)$, where $c_f(S)$ is the facility cost and $c_s(S)$ is the service cost of the solution $S$.

Chudak-Williamson~\cite{chudak} suggested a local search algorithm to find a good approximate solution for UnifFLP (without penalties). Starting with a feasible solution $S$ the following operations are performed to improve the solution if possible.

\begin{itemize}
\item  {\textbf{add(s):}}  $S\gets S\cup \{s\}, s\not\in S$. 
\item  {\textbf{delete(s):}} $S\gets S \setminus \{s\} , s \in S$.  
\item  {\textbf{swap(s, t):}} $S\gets S\cup  \{t\} \setminus \{s\}, s \in S, t \notin S$. 

\end{itemize} 

Add operation is used to bound the service cost as per the following Lemma.

\begin{lemma}[\cite{Korupolu,chudak}]
\label{lem-serv}
The service cost $c_s(S)$  of a locally optimal solution $S$ is at most the total cost $c(S^*)$ of the optimum solution $S^*$ i.e.
\begin{center}
\vspace{-0.4cm}
$c_s(S) \leq c_s(S^*) + c_f(S^*)  =  c(S^*)$.
\end{center}
\end{lemma}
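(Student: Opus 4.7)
The plan is to derive the bound by applying the local-optimality condition to the add operation for every facility $s^* \in S^*$ and summing the resulting inequalities.

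First I would invoke local optimality of $S$: for each $s^* \in S^*$, performing $\mathrm{add}(s^*)$ cannot strictly decrease the total cost, so
\[
f_{s^*} + \bigl(c_s(S \cup \{s^*\}) - c_s(S)\bigr) \geq 0.
\]
To turn this into a useful inequality I would upper-bound $c_s(S \cup \{s^*\})$ by exhibiting an explicit (not necessarily optimal) feasible assignment in the augmented solution. Starting from the current $S$-assignment, reroute $x^*_{s^*j}$ units of each client $j$'s demand to $s^*$, where $x^*_{ij}$ denotes the optimal $S^*$-assignment. If the transferred units are drawn proportionally from the facilities currently serving $j$ in $S$, their cost under $S$ equals $x^*_{s^*j}\,\overline{c}_j(S)$, where $\overline{c}_j(S)$ is the mean per-unit service cost of $j$ in $S$. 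The reassignment is feasible because $s^*$ receives total load $\sum_j x^*_{s^*j} \leq U$ (its capacity in $S^*$), and every other facility's load only decreases.

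Substituting this upper bound yields $f_{s^*} \geq \sum_j x^*_{s^*j}\bigl(\overline{c}_j(S) - c_{s^*j}\bigr)$. Summing over $s^* \in S^*$ and using $\sum_{s^* \in S^*} x^*_{s^*j} = d_j$ to recognize the two resulting cross-sums as $c_s(S)$ and $c_s(S^*)$ telescopes the inequalities into
\[
c_f(S^*) \geq c_s(S) - c_s(S^*),
\]
which rearranges to the claimed bound $c_s(S) \leq c_s(S^*) + c_f(S^*) = c(S^*)$.

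The main obstacle is the capacitated setting: unlike the uncapacitated case, where a client can be reassigned wholesale to $s^*$, here the rerouting must be fractional in order to respect the uniform capacity $U$. The proportional-transfer construction, together with the observation that $s^*$'s load in the constructed assignment mirrors its load in $S^*$, is precisely what reconciles capacity feasibility with a tight-enough service-cost saving.
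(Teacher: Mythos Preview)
Your argument is correct and is precisely the standard proof that the paper is invoking by citing \cite{Korupolu,chudak}; the paper does not spell out a proof of this lemma, but its proof of the penalty analogue (Lemma~\ref{lem-servp}) follows the same template of writing the non-improvement inequality for $add(t)$, exhibiting a feasible reassignment that sends the $S^*$-load of $t$ to $t$, and summing over $t\in S^*$. Your explicit use of proportional rerouting to guarantee capacity feasibility is exactly the detail needed in the capacitated setting and matches the cited sources.
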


\textbf{Path Decomposition}: Standard technique of path decomposition provides a feasible way of re-assigning demands from $s \in S$ to a facility $t \in F$. Let $(S, x)$ be a solution where  $x(s,j)  $ denote the demand of client $j$ served by the facility $s$ in $S$.  The total demand of client $j$ served in $ S$ is then $\sum_{s \in S} x(s,j) $, denoted by $x(\cdot, j)$. We can think of the assignment $x$ as a flow in a bipartite graph with vertices corresponding to facilities and clients, where $x(s,j) $ units flow from a facility $s$ to a client $j$.  To compare the locally optimal solution $(S, x)$ with the optimal solution $(S^*, x^*)$  a flow of $x-x^*$ is considered, i.e. the flow on the edge $(s,j)$ is $x(s,j)-x^*(s,j)$. A negative flow indicates flow in a reverse direction i.e. from a client to a facility. By standard path stripping arguments,  the flow is decomposed into a set of paths $P$ and cycles $C$. Each path starts at a vertex in $S$ and ends at a vertex in $S^*$. 
Let $w(p)$ denote the amount of flow on path $p$ and  $c(p)$ denote its cost, the sum of edge costs for each edge in the path.  For any subset of paths $P' \subseteq P$: $w(P')=\sum_{ p \in P'}w(p)$ and  its value $v(P')=\sum_{p \in P'} w(p)c(p)$ .
Then clearly,  $\sum_P c(P)w(P) \leq c_s(S) + c_s(S^*)$.

Using path decomposition, the notion of {\em transfer paths}, {\em swap paths}  and that of {\em heavy} and {\em light} facilities are defined. \textit{Transfer paths} are paths from $S \setminus S^*$ to $S \cap S^*$. They are used to transfer the demands served by a facility $s \in S \setminus S^*$ to other facilities in $S$, when either $s$ itself is closed  or some other facility $i \in S$ is closed and its demands are transferred to $s$. \textit{Swap paths} are paths from $S \setminus S^*$ to $S^* \setminus S$. They are used to transfer the demands of $s \in S \setminus S^*$ to a facility $t \in S^* \setminus S$.  
A facility in   $S\setminus S^*$ is said to be {\em heavy} if the total weight of the swap paths emanating from it is more than $U/2$ and it is called {\em light} otherwise. Let $H$ and $L$ be the set of {\em heavy} and {\em light} facilities respectively in $S \setminus S^*$. Let $v(Sw(H, \cdot))$  denote the cost of the swap paths and $v(Tr(H, \cdot))$ denote the cost of the transfer paths for facilities in $H$. Similarly, let $v(Sw(L, \cdot))$  denote the cost of the swap paths and $v(Tr(L, \cdot))$ denote the cost of the transfer paths for facilities in $L$.

 Two transshipment problems are set up, one between the set $H$ of  heavy facilities and $S^* \setminus S$  and another between the set $L$ of light facilities and $S^* \setminus S$. 
The increase in service cost by reassigning a demand of client $j$ from a facility $s \in S \setminus S^*$, (to a facility $t \in S^* \setminus S$, is $c_{tj}-c_{sj}$ which is bounded from above by $c_{st}$ using triangle inequality.  
When a facility $s$ is swapped with another facility $t$, at most $U$ demands  of $s$ need to be reassigned to $t$ leading to a reassignment cost of at most $U c_{st}$. The cost of closing $s$ and opening $t$ is $f_t - f_s$. The overall cost of the swap operation is then bounded by  $U c_{st} + f_t - f_s$. Let $\hat{c}_{st}$ denote this cost.

The transshipment problem is set up so that each facility in $H \setminus S^*$ is closed exactly once and each facility in $S^* \setminus S$ is opened at most twice.
Let $y(s, t)$ denote the fraction of  demands served by $s$  assigned to $t$. Thus the problem is formulated as
 \[ minimize \sum\limits_{s \in H, t \in S^*\setminus S} \hat{c}_{st} y(s,t) \]
 \vspace{-0.3cm}
 subject to
 \begin{align*}
  \vspace{-0.4cm}
\sum\limits_{t \in S^* \setminus S } y(s, t) = 1 , \quad  \forall s \in H  \quad (1)\\
\sum\limits_{s \in H } y(s,t) \leq  2 , \quad\forall t\in S^*\setminus S  \quad (2)\\
y(s,t) \geq 0 ,\quad \forall s \in H, t\in S^*\setminus S,  \quad (3)
\end{align*}

\begin{lemma}[~\cite{chudak}]

There is a fractional solution to the transshipment problem of cost no more than $2 v(Sw(H, .)) + 2 c_f (S^* \setminus S) - c_f(H) $.
\end{lemma}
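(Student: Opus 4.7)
The plan is to exhibit an explicit feasible $y$ coming from the path decomposition and then bound its cost term by term. For each $s \in H$ and $t \in S^* \setminus S$, let $w_{st}$ denote the total weight of swap paths from $s$ to $t$, and set $w_s = \sum_{t \in S^* \setminus S} w_{st}$. Since $s$ is heavy, $w_s > U/2$. Define
\[
y(s,t) \;=\; \frac{w_{st}}{w_s}.
\]

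First I would verify feasibility. Constraint $(1)$ holds by construction and $(3)$ is immediate. For $(2)$, the swap paths that terminate at $t \in S^* \setminus S$ are carved out of the $(x-x^*)$-flow, so the total weight of paths arriving at $t$ equals $\sum_j x^*(t,j) \le u_t = U$; in particular $\sum_{s \in H} w_{st} \le U$. Combined with $w_s > U/2$ this gives
\[
\sum_{s \in H} y(s,t) \;=\; \sum_{s \in H} \frac{w_{st}}{w_s} \;<\; \frac{2}{U}\sum_{s \in H} w_{st} \;\le\; 2.
\]

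Next I would bound the objective by splitting $\hat{c}_{st} = U c_{st} + f_t - f_s$ into a service part and a facility part. For the service part, the triangle inequality along each swap path $p$ from $s$ to $t$ gives $c(p) \ge c_{st}$, hence $\sum_t w_{st}c_{st} \le v(Sw(s,\cdot))$; then
\[
\sum_{s,t} U c_{st}\, y(s,t) \;=\; \sum_s \frac{U}{w_s}\sum_t w_{st}c_{st} \;<\; 2\sum_{s\in H} v(Sw(s,\cdot)) \;=\; 2\, v(Sw(H,\cdot)).
\]
For the facility part, using $\sum_t y(s,t)=1$ and $\sum_s y(s,t) \le 2$,
\[
\sum_{s,t} (f_t-f_s)\,y(s,t) \;=\; \sum_t f_t\!\sum_s y(s,t) \;-\; \sum_s f_s \;\le\; 2\,c_f(S^*\setminus S) - c_f(H).
\]
Adding the two inequalities yields exactly the stated bound.

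The main obstacle, to the extent there is one, is the capacity argument behind $(2)$: one has to recall that paths in the decomposition are paths of the signed flow $x - x^*$, so the total weight entering a vertex $t \in S^*\setminus S$ is exactly what $t$ serves in the optimal solution, which is capped by $U$. Once that is in hand, the heaviness property $w_s > U/2$ both legalizes the normalization (so that $y(s,t)=w_{st}/w_s$ lies in $[0,1]$ with row-sum one) and lets the factor $U/w_s$ be absorbed into the factor $2$ in the service-cost bound, which is why the heavy/light split is needed in the first place.
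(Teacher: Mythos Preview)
Your argument is correct and is precisely the approach the paper has in mind: the paper's proof is the single sentence ``swap paths provide a flow whose cost is no more than $2 v(Sw(H,\cdot)) + 2 c_f(S^*\setminus S) - c_f(H)$,'' and your construction $y(s,t)=w_{st}/w_s$ together with the heaviness bound $w_s>U/2$ is exactly how one unpacks that sentence.
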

\begin{proof}
Swap paths provide a flow whose cost is no more than $2 v(Sw(H, \cdot)) + 2 c_f (S^* \setminus S) - c_f(H)$. 
\end{proof}



 


From the integrality of the transshipment polyhedra, there exists an integral solution to the transshipment problem of cost no more than this. When $y(s, t) = 1, $, we perform $swap(s, t)$ assigning $w(Sw(s, \cdot))$ demands of $s$ to $t$ at cost at most $\hat{c}_{st}$.  Assign the remaining $w(Tr(s, \cdot))$ demands of $s$ to other facilities in $S$ at a cost of $v(Tr(s, \cdot))$.  Thus we get the following lemma:

\begin{lemma}[\cite{chudak}]
If no  swap operation is admissible, then
\label{lem:unif-heavy}
\begin{center}
\vspace{-.3cm}
$c_f(H) \le 2 c_f (S^* \setminus S) + 2 v(Sw(H, \cdot)) + v(Tr(H, \cdot))$
\end{center}
\end{lemma}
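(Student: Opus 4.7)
The plan is to combine the fractional bound from the previous lemma with the no-improving-swap condition that defines local optimality.

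First, by the integrality of the transshipment polyhedron, the LP in the previous lemma admits an integral optimal solution $y$ of cost at most $2 v(Sw(H,\cdot)) + 2 c_f(S^* \setminus S) - c_f(H)$. By constraints (1) and (3), this integral $y$ encodes a map $\phi : H \to S^* \setminus S$, and constraint (2) guarantees that no facility in $S^* \setminus S$ is the image of more than two facilities of $H$.

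Next, for each $s \in H$ I would consider the operation swap$(s, \phi(s))$. One explicit feasible post-swap assignment routes the $w(Sw(s,\cdot))$ units served by $s$ along its swap paths to $\phi(s)$ and the remaining $w(Tr(s,\cdot))$ units along its transfer paths to other facilities in $S$. The reassignment cost of this specific choice is at most $U c_{s,\phi(s)} + v(Tr(s,\cdot))$, and together with the facility-cost change $f_{\phi(s)} - f_s$ the total cost change of swap$(s, \phi(s))$ is at most $\hat{c}_{s,\phi(s)} + v(Tr(s,\cdot))$. Local optimality forces this upper bound on the cost change to be nonnegative, giving $0 \le \hat{c}_{s,\phi(s)} + v(Tr(s,\cdot))$ for every $s \in H$.

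Summing this inequality over $s \in H$ yields $0 \le \sum_{s \in H} \hat{c}_{s,\phi(s)} + v(Tr(H,\cdot))$; substituting the integral-LP upper bound $2 v(Sw(H,\cdot)) + 2 c_f(S^* \setminus S) - c_f(H)$ for the first sum and rearranging gives exactly the claimed inequality. The main subtlety to keep in mind is the factor $2$ on $c_f(S^* \setminus S)$: when two heavy facilities share an image $t$, the summed inequality effectively counts $f_t$ twice, but this matches constraint (2) of the LP and is absorbed into the already-stated transshipment bound, so no further work is needed. Everything else is bookkeeping.
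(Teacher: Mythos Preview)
Your proposal is correct and follows essentially the same route as the paper: obtain an integral transshipment solution of cost at most $2v(Sw(H,\cdot))+2c_f(S^*\setminus S)-c_f(H)$, interpret it as swaps $swap(s,\phi(s))$ whose cost change is bounded by $\hat c_{s\phi(s)}+v(Tr(s,\cdot))$, invoke local optimality to make each such bound nonnegative, and sum. One cosmetic point: the $w(Sw(s,\cdot))$ units are not literally routed ``along swap paths to $\phi(s)$'' (those paths end at various $t\in S^*\setminus S$); they are simply reassigned directly to $\phi(s)$, which is exactly how the paper phrases it and is what your cost bound $U\,c_{s,\phi(s)}$ actually justifies.
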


Similarly, the transshipment problem between $L$ and $S^*\setminus S$ is setup such that each facility in $L$ is closed exactly once and a facility in $S^*\setminus S$ is opened at most once. 
Let $N_s$ be the unused capacity at a facility $s$ in $L$. Since $s$ is a light facility, we must have $N_s + w(Tr(s,\cdot)) \geq U/2$. 
 Let $\theta (s)$ be the cost per unit capacity for making $U/2$ units of capacity available at the node $s$.
 Let $\hat{c}_{st} = w(Sw(s,\cdot))c_{st} + f_t-f_s$ for $t \in S^*\setminus S$, $\hat{c}_{st} = w(Sw(s,\cdot))(c_{st} + \theta (t)) -f_s$ for $t \in L, t\neq s,$ and $\hat{c}_{ss}= \infty$.
 The transshipment problem from $L$ to $( S^*\setminus S)\cup L$  is then defined as follows: 

$$ Minimize \sum\limits_{s \in L, t \in (S^*\setminus S) \cup L} \hat{c}_{st}y_{st}$$

subject to:
\begin{align*}
\sum\limits_{ t \in (S^*\setminus S)\cup L} y_{st} = 1 \quad \forall s\in L \quad (1)\\
 \sum_{s \in L} y_{st} \leq 1, \quad \forall t \in S^* \setminus S  \quad (2)\\
 y_{st} \geq 0, \quad \forall s\in L,  t \in (S^* \setminus S)\cup L \quad (3) \\
\end{align*}

\begin{lemma}[\cite{chudak}]
There is a fractional solution to the transshipment problem with a cost not more than $2v(Sw(L,\cdot)) + v(Tr(L,\cdot)) -c_f(L) + c_f(S^* \setminus S)$.
\end{lemma}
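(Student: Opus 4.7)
The plan is to mirror the construction used in the heavy-case lemma, with two modifications tailored to the light regime. Since $w(Sw(s,\cdot)) < U/2$ for every $s \in L$, the swap paths out of $s$ alone cannot absorb the unit ``closure'' mass of $s$; the residual fraction must be routed through another light facility, whose combined unused and transfer-out capacity is at least $U/2$. The cost $\theta(t)$ introduced in the definition of $\hat c_{s,t}$ for $t \in L$ precisely captures the per-unit price of freeing $U/2$ worth of capacity at such a destination, and it is this term that will later be charged against transfer-path value.

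For the first part of the construction, for every swap path from $s \in L$ to some $t \in S^* \setminus S$ of weight $w(p)$, I would set $y_{s,t}$ equal to $w(p)$ divided by an appropriate normalizing factor of order $U$, so that the constraint $\sum_{s \in L} y_{s,t} \le 1$ is satisfied by virtue of $w(Sw(\cdot, t)) \le U$ (the capacity at $t$). Using $\hat c_{s,t} = w(Sw(s,\cdot))c_{st} + f_t - f_s$ together with the light bound $w(Sw(s,\cdot)) \le U/2$ and $c_{st}$ upper bounded by the cost of the underlying swap path, the aggregate contribution of these terms works out to at most $2v(Sw(L,\cdot)) + c_f(S^* \setminus S)$ modulo a fractional $\sum_s \alpha_s f_s$ credit, where $\alpha_s$ denotes the mass of $s$ routed via swap paths.

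For the second part, the residual closure mass $1-\alpha_s$ at each $s \in L$ is shipped to a partner $s' \in L \setminus \{s\}$. I would obtain the pairing from a permutation (or, more carefully, a fractional matching) on $L$ chosen so that no $s'$ absorbs more than its own allowance of inflow; this frees capacity at $s'$ which can be charged against $v(Tr(s',\cdot))$. Invoking $\hat c_{s,s'} = w(Sw(s,\cdot))(c_{ss'} + \theta(s')) - f_s$, the distance $c_{ss'}$ is controlled by concatenating a swap path from $s$ with a transfer path ending at $s'$ through a common intermediate facility in $S \cap S^*$ (triangle inequality), and the $\theta(s')$-term is absorbed by $v(Tr(s',\cdot))$ since the transferred demand is precisely what creates the room. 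Summed over $L$ this yields the $v(Tr(L,\cdot))$ contribution, and the $-f_s$ terms collected over every $s \in L$ closed exactly once produce the $-c_f(L)$ savings.

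The main obstacle I foresee is the combinatorial construction of the light-to-light pairing: one needs to ensure that the capacity freed at each $s'$ is not double-counted against $v(Tr(s',\cdot))$ and that the triangle-inequality path between $s$ and $s'$ can be charged to its two constituent half-paths without losing a factor. Once that pairing is in hand, combining the two contributions yields the claimed bound $2v(Sw(L,\cdot)) + v(Tr(L,\cdot)) - c_f(L) + c_f(S^* \setminus S)$.
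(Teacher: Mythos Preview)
Your high-level picture is right, but the light-to-light routing is where the argument actually lives, and your proposed mechanism for it does not work. You suggest bounding $c_{ss'}$ by concatenating a swap path out of $s$ with a transfer path ``ending at $s'$ through a common intermediate facility in $S\cap S^*$.'' Swap paths terminate in $S^*\setminus S$, while transfer paths terminate in $S\cap S^*$; there is no common endpoint to glue them along, so the triangle inequality step you describe has no support. Relatedly, a permutation on $L$ chosen independently of the swap-path structure gives you no handle on $c_{s,\sigma(s)}$ at all.

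The construction in Chudak--Williamson is different in both the normalization and the pairing. One normalizes by $w(Sw(s,\cdot))$, not by $U$: set $\tilde y_{st}=w(Sw(s,t))/w(Sw(s,\cdot))$, so that \emph{all} of the closure mass of $s$ is accounted for by its swap paths. The light-to-light routing is then induced by a \emph{primary facility} map $\pi:S^*\setminus S\to L$ defined by $\pi(t)=\arg\min_{s'\in L}\bigl(c_{s't}+\theta(s')\bigr)$. Only $\pi(t)$ sends its $t$-fraction to $t$ itself; every other $s\in L$ redirects its $t$-fraction to the light facility $i=\pi(t)$. This makes the constraint $\sum_s y_{st}\le 1$ automatic (a single $s$ contributes), and the distance is controlled by going through $t$: $c_{si}\le c_{st}+c_{it}$. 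The crucial step is that minimality of $\pi(t)$ gives $c_{it}+\theta(i)\le c_{st}+\theta(s)$, whence $c_{si}+\theta(i)\le 2c_{st}+\theta(s)$. Thus the $\theta$-term is charged to the \emph{source} $s$, not the destination, so summing over $s$ yields $\sum_{s\in L}(U/2)\theta(s)\le v(Tr(L,\cdot))$ with no double-counting, and the $2c_{st}$ terms sum to $2v(Sw(L,\cdot))$. Your permutation idea handles the $\theta$ bookkeeping but leaves you with no way to bound the connection cost; the primary-facility trick is what supplies both at once.
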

 \begin{proof}
 Using the fact that $N_s + w(Tr(s,\cdot)) \geq U/2$, {\em swap paths} and {\em transfer paths} provide a flow whose cost is no more than $2 v(Sw(L, \cdot)) +  v(Tr (L, \cdot)) + c_f(S^* \setminus S) + c_f(L)$.
 \end{proof}

When $y(s, t) = 1, $ for $t \in S^* \setminus S$, operations are defined in the same manner as in the case of heavy facilities. When $y(s, t) = 1, $ for $t \in L$, we drop $s$ assigning $w(Sw(s, \cdot))$ demands of $s$ to $t$ and the same amount of demands of $t$ to other facilities in $S$ using transfer paths at a total cost at most $\hat{c}_{st}$. Assign the remaining $w(Tr(s, \cdot))$ demands of $s$ to other facilities in $S$ at a cost of $v(Tr(s, \cdot))$.

\begin{lemma}[\cite{chudak}]
\label{lem:unif-light}
If no  swap/delete operation is admissible, then
 \begin{center}
 \vspace{-.3cm}
$c_f(L) \le c_f (S^* \setminus S) + 2 v(Sw(L, \cdot)) + 2 v(Tr(L, \cdot))$
\end{center}
\end{lemma}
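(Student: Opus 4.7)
The plan is to mirror the argument used for Lemma~\ref{lem:unif-heavy}, but applied to the transshipment between $L$ and $(S^*\setminus S)\cup L$ instead of the one between $H$ and $S^*\setminus S$. First, by the integrality of the transshipment polyhedron, the fractional solution guaranteed by the preceding lemma yields an integral optimal solution whose cost is at most $2v(Sw(L,\cdot)) + v(Tr(L,\cdot)) - c_f(L) + c_f(S^*\setminus S)$. For each $s\in L$, let $t(s)$ denote the unique facility with $y(s,t(s))=1$.

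Next, I would interpret each integral assignment as a concrete local-search move, exactly as described in the paragraph immediately preceding the lemma. If $t(s)\in S^*\setminus S$, it encodes the operation $\textbf{swap}(s,t(s))$, whose net cost change is at most $\hat{c}_{st(s)} + v(Tr(s,\cdot))$: the first term accounts for reassigning the $w(Sw(s,\cdot))$ swap-path demands to $t(s)$ together with the facility-cost difference $f_{t(s)}-f_s$, while the second accounts for rerouting the remaining $w(Tr(s,\cdot))$ demands along transfer paths. If instead $t(s)\in L$, the assignment encodes $\textbf{delete}(s)$: the $w(Sw(s,\cdot))$ swap-path demands of $s$ are sent to $t(s)$, with the $\theta(t(s))$ term inside $\hat{c}_{st(s)}$ paying for freeing up the required capacity at $t(s)$ via transfer paths emanating from $t(s)$, and the remaining $w(Tr(s,\cdot))$ demands are rerouted at cost $v(Tr(s,\cdot))$.

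By local optimality, neither operation improves the solution, so $\hat{c}_{st(s)} + v(Tr(s,\cdot)) \ge 0$ for every $s\in L$. Summing over $s\in L$ yields
\[
\sum_{s\in L} \hat{c}_{st(s)} \;\ge\; -v(Tr(L,\cdot)).
\]
Combining this lower bound with the upper bound $\sum_{s\in L}\hat{c}_{st(s)} \le 2v(Sw(L,\cdot)) + v(Tr(L,\cdot)) - c_f(L) + c_f(S^*\setminus S)$ supplied by integrality and rearranging isolates $c_f(L)$ on the left-hand side, producing exactly the claimed inequality.

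The main obstacle I anticipate lies in the bookkeeping for the delete case: since $t(s)\in L$ is itself currently open and may lack free capacity, redirecting $s$'s demands to $t(s)$ forces additional transfers out of $t(s)$. I need to verify carefully that the $\theta(t)$ surcharge really does pay for making $U/2$ units of capacity available at $t$ (which the inequality $N_s + w(Tr(s,\cdot)) \ge U/2$ underlying the previous lemma guarantees), and that the upper bound on the fractional transshipment cost charges each transfer path at most once. Once this accounting is pinned down, the proof collapses to the short algebraic manipulation sketched above.
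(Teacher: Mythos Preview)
Your proposal is correct and follows essentially the same approach as the paper: pass from the fractional bound of the preceding lemma to an integral transshipment solution, read off for each $s\in L$ either a swap or a delete, use local optimality to get $\hat c_{s\,t(s)}+v(Tr(s,\cdot))\ge 0$, sum, and rearrange. The feasibility concern you flag in the delete case is exactly the one the paper handles via $N_t+w(Tr(t,\cdot))\ge U/2$ for $t\in L$ together with $w(Sw(s,\cdot))\le U/2$ for light $s$, so the $\theta(t)$ surcharge indeed pays for the space freed at $t$; since each operation is analysed in isolation rather than performed simultaneously, no double-counting arises.
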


Adding the results of lemma~\ref{lem:unif-heavy} and~\ref{lem:unif-light},  we get 
\begin{center}
\vspace{-.3cm}
$c_f(S\setminus S^*)\leq 3c_f(S^*\setminus S) + 2(c_s(S)+c_s(S^*))$ 
\end{center} 
Adding $c_s(S) +c_f(S \cap S^*) $ to both sides and using the bound on the service cost from Lemma~\ref{lem-serv}, we get 
\begin{center}
\vspace{-.5cm}
$c(S) \leq 6c_f(S^*)+ 5c_s(S^*) $
\end{center}

\subsection{Local Search Algorithm for UnifFLPP}
\label{algo-unif}

One may be tempted to add a dummy facility $N$ of zero opening cost and large capacity ($\sum_j d_j$) with $p_j$ as the service cost of serving client $j$ from $N$ and then applying CW on it. The problem with this approach is that the set of penalties do not satisfy triangle inequality. As a result, if $N$ is an internal node on a path in path decomposition, the cost of an $(s, t)$ edge can not be bounded by the cost of the $(s, t)$ paths. 

For CapFLPP the optimal assignment of clients to a given set $S$ of facilities  can be computed by solving a min-cost flow problem with penalty (MCFP) (MCFP can be solved by introducing a dummy supply node with large capacity ($\sum_j d_j$), an edge from client $j$ to the dummy node with per unit cost $p_j$ and solving min cost flow on it). Therefore to solve the CapFLPP also we only need to determine a good subset $S\subseteq F$ of facilities. The cost of the solution $S$ is $c(S) = c_f(S) + c_s(S) + c_p(S)$, where $c_f(S)$ is the facility cost, $c_s(S)$ is the service cost and $c_p(S)$ is the penalty paid by the solution $S$.

The operations remain the same as in CW except that now MCFP is used to assign the clients instead of MCF. 

\textbf{Path Decomposition}: To handle penalties we make the following modifications -  we introduce a dummy facility node, say $N$,  to reflect the penalty paid in the solution $S$. The opening cost of $N$ is $0$ and it has a capacity $u_N = \sum_j d_{j}$. $x(N,j)$ denotes the number of units for which the penalty is paid by client $j$  in $S$.  Thus $x(N, j) = d_j - x(\cdot,j)$.  Now the total flow entering any client exactly equals its demand $d_j$.
The dummy facility in the optimal solution $(S^*, x^*)$ is denoted by $N^*$. Thus, $x^*(N, j) = 0$ and $ x(N^*,j) = 0 \ \forall\  j$.  Note that the presence of $N$ and $N^*$ allows us to maintain flow conservation for each client. 

 By standard path stripping arguments, we decompose the flow into a set of paths $P$ and cycles $C$. Each path starts at a vertex in $S \cup N$ and ends at a vertex in $S^* \cup N^*$. Note that $N$ and $N^*$ cannot be internal vertices on any path as $N$ has no incoming edge and $N^*$ has no outgoing edge. Note that this is the most critical requirement  as penalties do not satisfy triangle inequality. Introducing a single dummy facility ($N$) in the beginning would have allowed $N$ to be in $S \cap S^*$ and hence be an internal vertex on $(s, t)$ paths. Clearly, no cycle passes through a facility outside $S \cap S^*$.  Cycles must have cost $0$ as both $x$ and $x^*$ are minimum cost transshipments and they can be eliminated by augmenting the flows along them. We clearly have,\\
 
 $\>\>$  $\sum_P c(P)w(P) \leq c_s(S)+c_p(S)+c_s(S^*)+c_p(S^*)$.

Let $P(s, t)$ be the set of paths starting at a vertex $s$ and ending at a vertex $t$. Further, let $P(s,\cdot)$  denote the set of paths starting at $s$ and $P(\cdot,t)$  denote the set of paths ending at $t$. Note that $w(P(s, t)), s \neq N, t \neq N^*$ correspond to the amount of demand served in both $S$ and $S^*$,  $w(P(s, N^*)), s \neq N$ correspond to the amount of demand served in $S$  but not in $S^*$,  $w(P(N, t)), t \neq N^*$ correspond to the amount of demand served in $S^*$ but not in $S$ and  $w(P(N, N^*))$ correspond to the amount of demand not served in both $S$ and $S^*$.

Add operation bounds the sum of service cost and the penalty cost as given in the following lemma:
\begin{lemma}
\label{lem-servp}
The sum of service costs $c_s(S)$ and the penalty costs $c_p(S)$ of a locally optimal solution $(S,x)$ is at most the total cost $c(S^*)$ of the optimum solution $(S^*,x^*)$ i.e.

$c_s(S)+c_p(S) \leq c_s(S^*)+c_p(S^*)+c_f(S^*)  =  c(S^*)$.
\end{lemma}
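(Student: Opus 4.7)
The plan is to reduce the lemma to a single structural fact: the minimum service-plus-penalty cost $\text{MCFP}(T)$ (the min of $c_s + c_p$ over all feasible assignments on facility set $T$) is a submodular function of $T$. Two immediate consequences anchor the argument: since $x$ is MCFP-optimal on $S$, $\text{MCFP}(S) = c_s(S) + c_p(S)$; and since $x^*$ is itself a feasible assignment on the larger set $S \cup S^*$, $\text{MCFP}(S \cup S^*) \le c_s(S^*) + c_p(S^*)$.

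Local optimality of the add operation supplies the third ingredient: for every single $s^* \in S^* \setminus S$,
\[
\text{MCFP}(S) \;\le\; f_{s^*} + \text{MCFP}\bigl(S \cup \{s^*\}\bigr).
\]
Submodularity says that adding a facility $t$ to a smaller set gives at least as much savings as adding it to a larger set. Enumerating $S^* \setminus S = \{t_1, \ldots, t_m\}$ and applying submodularity to each $t_k$ with $A = S$ and $B = S \cup \{t_1, \ldots, t_{k-1}\}$, the single-facility add inequalities telescope into
\[
\text{MCFP}(S) - \text{MCFP}(S \cup S^*) \;\le\; \sum_{k=1}^{m} f_{t_k} \;=\; c_f(S^* \setminus S).
\]
Combining with $\text{MCFP}(S \cup S^*) \le c_s(S^*) + c_p(S^*)$ and $c_f(S^* \setminus S) \le c_f(S^*)$ then yields
\[
c_s(S) + c_p(S) \;\le\; c_f(S^*) + c_s(S^*) + c_p(S^*) \;=\; c(S^*),
\]
which is exactly the lemma.

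The main obstacle is verifying the submodularity of $\text{MCFP}$ in this uniform-capacity-with-penalty setting. A direct averaging argument (taking $(x^{A+t} + x^B)/2$ as a feasible flow on $(A+t) \cup B$) yields only a concavity-type inequality, not submodularity, so I would fall back on the path-decomposition machinery already set up in the preceding subsection: decomposing $x^{A+t} - x^{B+t}$ into paths and cycles and executing each independently should shuttle flow between the two optima in a way that proves the exchange inequality. As a sanity check, a direct per-facility argument via the proportional-scaling reassignment $y^{s^*}(s^*,j) = x^*(s^*,j)$ and $y^{s^*}(i,j) = x(i,j) \cdot (d_j - x^*(s^*,j))/d_j$ does deliver the per-facility bound $\sum_j x^*(s^*,j)\,\bar c_j \le f_{s^*} + \sum_j x^*(s^*,j)\, c_{s^* j}$, with $\bar c_j = \bigl(\sum_{s \in S} x(s,j) c_{sj} + x(N,j) p_j\bigr)/d_j$, for each $s^* \in S^* \setminus S$; but small explicit instances show that this bound can fail pointwise for $s^* \in S \cap S^*$ when capacity forces $x$ to serve some of $j$ at per-unit cost exceeding $c_{s^* j}$, and submodularity is exactly what absorbs this per-facility deficit through a global flow-exchange.
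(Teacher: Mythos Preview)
Your argument has a genuine gap: it rests on the claim that $\mathrm{MCFP}$ enjoys diminishing marginal returns in the open facility set, but you never prove this. You flag it yourself as ``the main obstacle'' and then offer only a vague plan (``fall back on the path-decomposition machinery'') rather than an argument. Without it, the telescoping that converts the single-facility add inequalities into the bound $\mathrm{MCFP}(S)-\mathrm{MCFP}(S\cup S^*)\le c_f(S^*\setminus S)$ is unsupported. A terminological point compounds this: the property you need is \emph{supermodularity} of $\mathrm{MCFP}$, equivalently submodularity of the savings function $T\mapsto -\mathrm{MCFP}(T)$; calling $\mathrm{MCFP}$ itself submodular is the reverse inequality, and your one-client, two-facility sanity examples already violate it (e.g.\ $\mathrm{MCFP}(\emptyset)+\mathrm{MCFP}(\{a,b\})>\mathrm{MCFP}(\{a\})+\mathrm{MCFP}(\{b\})$ whenever $a$ and $b$ are redundant for a single unit of demand). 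The diminishing-returns property you actually need does hold for transportation-type objectives, but establishing it is a nontrivial lemma in its own right, and nothing in your proposal supplies it.

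The paper's proof is completely different and avoids this machinery. For each $t\in S^*$ it writes a single $\mathrm{add}(t)$ inequality based on the concrete reassignment ``route to $t$ exactly the demand units that $t$ serves in $x^*$,'' and then simply \emph{sums} these inequalities over all $t\in S^*$. Because every unit served in $S^*$ appears in exactly one such inequality, the sum directly yields $\sum_{C_1}(s_j-o_j)+\sum_{C_3}(p_j-o_j)\le c_f(S^*)$; the leftover $C_2$ terms are dispatched by the one-line observation $s_j\le p_j$ (otherwise the MCFP on $S$ would drop that unit). There is no telescoping, no intermediate sets $S\cup\{t_1,\dots,t_{k-1}\}$, and no appeal to sub- or supermodularity. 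Your final paragraph in fact rediscovers this per-facility reassignment (in proportional-scaling form) and then discards it over a pointwise worry about $t\in S\cap S^*$; the paper's route is precisely to keep that per-facility argument and sum, which is both shorter and self-contained.
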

\begin{proof}
Let $C_1$ be the set of demands served in $S$ and $S^*$ both, $C_2$ be the set of demands served in $S$ but not in $S^*$, $C_3$ be the set of demands served in $S^*$ but not in $S$ and $C_4$ be the set of demands not served in any of $S$ and $S^*$. Let $o_j$ and $s_j$ be the service costs of $j$ in $S^*$ and $S$ respectively. Since $add(t)$ operation does not reduce the cost, we have the following:
\vspace{-0.3cm}
$$f_t + \sum_{j \in C_1, served by t in S^*}  (o_j  - s_j)  + \sum_{j \in C_3, served by t in S^*}  (o_j  - p_j)  \ge 0$$
\vspace{-0.4cm}
$$\sum_t f_t + \sum_{j \in C_1}  (o_j  - s_j)  + \sum_{j \in C_3}  (o_j  - p_j)  \ge 0$$
\vspace{-0.3cm}
For $j \in C_2, s_j \le p_j$ for else we would not have served it in $S$. Thus $\sum_{j \in C_2} s_j \le \sum_{j \in C_2} p_j,  \&$
 $$ \sum_{j \in C_1 \cup C_2} s_j  + \sum_{j \in C_3} p_j \le c_f(S^*) + \sum_{j \in C_1 \cup C_3} o_j +  \sum_{j \in C_2} p_j$$

Adding $\sum_{j \in C_4} p_j$ to both the sides we get the desired result.



  \end{proof}

\subsection{Bounding the facility cost for UnifFLPP} 

 {\em Transfer paths} and {\em swap paths} are defined in the same way as in CW.  Let $s \in S \setminus S^*$ be a facility with $w(Sw(s,.)) > 0$.  $s$ is said to be {\em heavy} if $w(Sw(s, \cdot)) \geq  U/2$ otherwise it is called {\em light}.
In our path decomposition, in addition to  {\em transfer paths} and {\em swap paths} we also have another type of paths, the paths that start at a facility $s$ in $S$ and end at $N^*$. We call such paths the {\em penalty paths} (note that we don't need to care about the paths starting at $N$ and we simply ignore them) .  Consider one such penalty path $P$ starting at $s \in S$. Let $s'$ be the facility just before $N^*$ on this path. Then clearly $s' \in S$. Also, clearly $P$ corresponds to some client $j$ of $s'$. Let $Pen_j(s, s')$ be the set of {\em penalty paths} that start at $s$ and correspond to client $j$ of $s'$, $Pen_j(s) = \sum_{s'} Pen_j(s, s')$ be the set of all the {\em penalty paths} that start at $s$ and correspond to client $j$  and $Pen(s)$ denotes the set of all the {\em penalty paths} that start at $s$. Let $v(Pen(H))$ and $v(Pen(L))$ denote the cost of penalty paths for the heavy facilities $H$ and the light facilities $L$ respectively.

The transshipment problem for the heavy facilities is defined exactly in the same manner as in CW. Note that we do not need to care about $N$ and that we can open $N^*$ as many number of times as we want, as the facility opening costs of both $N$ and $N^*$ are zero.  We make the following modification in the assignment of clients: when $y(s, t) = 1$, we perform $swap(s, t)$. Assignments are made as follows:

\begin{itemize}
\item Assign $w(Sw(s, \cdot))$ demands of $s$ to $t$ at cost at most $\hat{c}_{st}$.  
\item  Assign $w(Pen(s))$ demands of $s$ to other facilities in $S$ via the penalty paths. To accommodate the demands of $s$ at these facilities, room is made available by paying penalty for some of the demands assigned to them. This is explained as follows: consider a path $P$ in $Pen_j(s)$ for some $j$ such that $w(Pen_j(s)) > 0 $. Let $s'$ be the facility just before $N^*$ on this path.  Assign $w(Pen_j(s, s'))$ demands of  $s$ to $s'$ and assign the same amount ($w(Pen_j(s, s'))$) of demands of client $j$ of $s'$ to $N^*$ (i.e. penalty is paid by client $j$ served by $s'$ for this much demand). This is done for all $j$ and $s'$. All this can be done at a cost no more than $v(Pen(s))$  . 
\item Assign the remaining $w(Tr(s, \cdot))$ demands of $s$ to other facilities in $S$ via the transfer paths at a cost of $v(Tr(s, \cdot))$. 
\end{itemize}

 Thus, we get the following lemma:


\begin{lemma} 
\label{lem:heavy:penalty}
If there is no admissible swap operation then

$c_f(H) \leq 2 v(Sw(H, .)) + 2 c_f (S^* \setminus S)  + v (Tr(H, .)) +  v(Pen((H)) $

\end{lemma}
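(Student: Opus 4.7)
The plan is to mimic the corresponding Chudak--Williamson argument as far as swap and transfer paths are concerned, and to use the new \emph{penalty paths} to carry out of $s$ the extra flow that was absent in the non-penalty analysis. The transshipment problem between $H$ and $S^*\setminus S$ is set up exactly as in CW: the coefficients $\hat c_{st}=w(Sw(s,\cdot))c_{st}+f_t-f_s$, the ``$s$ closed exactly once'' and ``$t$ opened at most twice'' constraints, and the construction of a fractional solution from the swap paths are all unchanged, because penalty paths do not appear in this part of the argument. By the lemma cited earlier together with the integrality of the transshipment polytope, we obtain an integral solution $y$ whose transshipment cost is at most
\[
2\, v(Sw(H,\cdot)) + 2\, c_f(S^*\setminus S) - c_f(H).
\]

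For each $s\in H$ with $y(s,t)=1$ we then perform the operation swap$(s,t)$ and redistribute $s$'s demand in three buckets, one per type of path leaving $s$ in the decomposition of $x-x^*$: the $w(Sw(s,\cdot))$ swap-path demands are routed to $t$ at cost at most $\hat c_{st}$; the $w(Tr(s,\cdot))$ transfer-path demands are pushed back to facilities in $S\cap S^*$ along the corresponding transfer paths at cost $v(Tr(s,\cdot))$; and the $w(Pen(s))$ penalty-path demands are handled exactly as specified in the three-bullet list immediately preceding the lemma. The non-existence of an improving swap forces the cost change of each such operation to be non-negative; summing over $s\in H$, and observing that each $t\in S^*\setminus S$ contributes its facility cost to the sum at most twice (constraint (2) of the transshipment), we obtain
\[
0 \;\le\; \sum_{s\in H}\hat c_{s,t(s)} + v(Tr(H,\cdot)) + v(Pen(H)) \;\le\; 2v(Sw(H,\cdot)) + 2c_f(S^*\setminus S) - c_f(H) + v(Tr(H,\cdot)) + v(Pen(H)),
\]
which, after moving $c_f(H)$ to the other side, is exactly the claim.

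The step I expect to require the most care is verifying that the penalty-path redistribution is a \emph{feasible} reassignment. Pushing $w(P)$ units along a path $P\in Pen_j(s,s')$ simultaneously perturbs the assignment at every intermediate facility of $P$, and one must check that (i) at each intermediate facility in $S\cap S^*$ the net change in used capacity is zero, (ii) the capacity freed at $s'$ by evicting $w(P)$ units of client $j$'s demand to $N^*$ exactly absorbs the $w(P)$ units re-routed from $s$, (iii) distinct penalty paths emanating from $s$ do not double-book the same slack anywhere along their shared edges, and (iv) the total extra cost, summed over all penalty paths out of $s$, is exactly $v(Pen(s))$, including the unit penalty $p_j$ paid on the final edge into $N^*$. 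All four points follow from flow conservation in the path decomposition, but the bookkeeping is what really separates the penalty-aware version of this argument from the cleaner original.
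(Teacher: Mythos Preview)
Your argument is correct and is essentially the paper's own proof: set up the heavy-facility transshipment exactly as in CW, invoke the fractional-solution bound and integrality to get an integral $y$ of cost at most $2v(Sw(H,\cdot))+2c_f(S^*\setminus S)-c_f(H)$, perform $swap(s,t)$ whenever $y(s,t)=1$ with the three-bucket reassignment (swap paths to $t$, transfer paths into $S$, penalty paths as in the bullet list), and sum the non-improvement inequalities over $H$. One small slip: for heavy facilities the paper (following CW) takes $\hat c_{st}=U\,c_{st}+f_t-f_s$, not $w(Sw(s,\cdot))\,c_{st}+f_t-f_s$; the latter is the light-facility coefficient. Your choice is still a valid upper bound on the reassignment cost and in fact yields the cited transshipment bound (even a tighter one), so the conclusion is unaffected. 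Your added discussion of why the penalty-path rerouting is jointly feasible with the swap- and transfer-path pushes is a point the paper leaves implicit.
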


In case of the light facilities, it is no longer true that the unused capacity $N_s$ of a facility in $L$ plus the total flow $w(Tr(s, \cdot))$ on its transfer paths must be at least $U/2$. In case when $ N_s + w(Tr(s, \cdot)) < U/2$, we will have to use its penalty paths. Thus $\theta(s)$ is defined as follows:

$\theta(s) = 0$, if $N_s \geq U/2$.

$\>\>\>\> (0.N_s + \frac{v(Tr(s,\cdot))}{w(Tr(s,\cdot))}(U/2-N_s))/(U/2)$, if $N_s < U/2 \le N_s + w(Tr(s,\cdot))$ and,

$\>\> (0.N_s + v(Tr(s,\cdot) + \frac{v(Pen (s))}{ w(Pen(s))} \cdot(U/2 - N_s -  w(Tr(s,\cdot)))/(U/2)$, otherwise. 

In the second case when $N_s + w(Tr(s,\cdot)) \ge U/2$, $\theta (s)(U/2) \leq v(Tr(s,\cdot)) \leq v(Tr(s,\cdot)) +  v(Pen (s))$.
In the third case when $N_s + w(Tr(s,\cdot)) + w(Pen(s)) \ge U/2$, $\theta (s)(U/2) \leq v(Tr(s,\cdot)) +  v(Pen (s))$. Hence in either case, $\theta (s)(U/2) \leq v(Tr(s,\cdot)) +  v(Pen (s))$.\\

The transshipment problem for the light facilities is then set up in the same manner as in CW with the new values for $\theta(s)$.

\begin{lemma}
There is a fractional solution to the transshipment problem with a cost not more than $2v(Sw(L,\cdot)) + v(Tr(L,\cdot)) + v(Pen(L)) -c_f(L) + c_f(S^* \setminus S)$
\end{lemma}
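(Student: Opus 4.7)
The plan is to mirror the Chudak-Williamson construction for light facilities, the only new ingredient being the enlarged notion of ``room available at a light target'' that penalty paths now provide, as captured by the modified $\theta(\cdot)$ defined just above the lemma.

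Concretely, I will build a fractional $y$ satisfying the light LP's constraints exactly as in CW. The swap-path component routes each $s \in L$'s closing partly to its swap-endpoints $t \in S^*\setminus S$, contributing swap-path cost plus the facility differential $f_t - f_s$; the residual ``light-to-light'' fractional mass is absorbed at another light target $t \in L$ by first consuming the unused capacity $N_t$, then displacing units via $t$'s transfer paths, and finally (new to our setting) displacing units via $t$'s penalty paths (i.e.\ paying the per-unit penalty). Absorbing up to $U/2$ units at $t$ is possible because $w(Sw(t,\cdot)) < U/2$ forces $N_t + w(Tr(t,\cdot)) + w(Pen(t)) > U/2$, and the per-unit cost satisfies $\theta(t)\cdot(U/2) \le v(Tr(t,\cdot)) + v(Pen(t))$, exactly as established in the case analysis preceding the lemma.

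Summing the swap-path component over $s \in L$ contributes $2\,v(Sw(L,\cdot))$ by the standard CW swap-path argument, plus at most $c_f(S^*\setminus S)$ from the $f_t$ terms (each $t \in S^*\setminus S$ is opened at most once) and $-c_f(L)$ from the $-f_s$ terms (each $s \in L$ is closed exactly once, so $\sum_t y_{st}=1$). Summing the light-to-light component, in which a unit of closing at $s$ absorbed at $t$ costs at most $\theta(t)$, contributes at most $\sum_{t \in L}\theta(t)(U/2) \le v(Tr(L,\cdot)) + v(Pen(L))$. Adding the two pieces yields the stated bound.

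The hard part will be arguing that each unit of $v(Pen(t))$ is charged at most once across all the absorption requests directed at $t$. The third branch of $\theta(t)$ caps penalty-path use at $t$ to the deficit $U/2 - N_t - w(Tr(t,\cdot))$, which is exactly what is needed to reach $U/2$ units of room and no more, with the cheaper slack ($N_t$ and transfer paths) consumed first. With this priority rule in place the routing is feasible, and the remainder of the CW analysis---in particular the verification that each $t \in S^*\setminus S$ is opened at most once---transfers verbatim, since only the per-unit cost at light targets has been updated and not the combinatorial structure of the underlying flow.
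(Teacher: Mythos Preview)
Your proposal has a genuine gap in the accounting of the $\theta$-term, and it stems from not invoking the one device that makes the CW light-facility argument go through: the \emph{primary facility} assignment $\pi(t)=\arg\min_{s'\in L}(c_{s't}+\theta(s'))$.

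You write that the light-to-light contribution is ``at most $\sum_{t\in L}\theta(t)(U/2)$'' because each light target $t$ absorbs at most $U/2$ units. But the light LP has no constraint bounding $\sum_{s}y_{st}$ for $t\in L$, and in the CW fractional solution the number of demand units routed to a light target $i$ is $\sum_{t:\,\pi(t)=i}\bigl(w(Sw(\cdot,t))-w(Sw(i,t))\bigr)$, which can far exceed $U/2$ when $i$ is primary for many $t$'s. So charging $\theta$ at the \emph{target} cannot be capped by $(U/2)\theta(t)$. The same problem hits your ``$2\,v(Sw(L,\cdot))$'' claim: for $s\neq\pi(t)$ the detour cost $c_{si}$ (with $i=\pi(t)$) is not a priori bounded by a swap-path length; you need $c_{si}\le c_{st}+c_{it}$ together with the minimality of $\pi(t)$ to get $c_{si}+\theta(i)\le 2c_{st}+\theta(s)$.

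The paper's proof uses exactly this primary-facility trick. The key inequality $c_{si}+\theta(i)\le 2c_{st}+\theta(s)$ moves the $\theta$-charge from the target $i$ to the \emph{source} $s$. Now the bound is immediate: each $s\in L$ has $\sum_t\tilde y_{st}=1$, so its total $\theta$-contribution is $w(Sw(s,\cdot))\theta(s)\le (U/2)\theta(s)\le v(Tr(s,\cdot))+v(Pen(s))$, the last step being the case analysis of $\theta$ you correctly quoted. Summing over sources gives $v(Tr(L,\cdot))+v(Pen(L))$ with no over-counting issue. Your intuition about ``each unit of $v(Pen(t))$ charged at most once'' is the right concern, but the resolution is to charge at the source, not to cap inflow at the target.
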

 
 \begin{proof}
For each $t \in S^* \setminus S$, let $s= argmin_{ s' \in L} (c_{s't} + \theta (s')) $.We call $s$ as the primary facility of $t$ and denote it by $ \pi (t) $.$c_{st} + \theta (s)$  denotes the cost of assigning to $s$, unit demand which was supposed to be assigned to $t$ when some other facility $i$ is closed; $\theta(s)$ denotes the cost of shifting that unit demand from $s$ using transfer paths. The fractional solution $\tilde{y}$ is then defined in the same manner as in CW but we will repeat it here for the sake of completeness: \\

$\tilde{y}_{st}= w(Sw(s,t))/w(Sw(s,\cdot))$ if $t \in S^*\setminus S$ and $s=\pi(t)$ .\\
$\>\>\>\>\>=0$,  if $t \in S^*\setminus S$ and $s \ne \pi(t)$.\\

$ \tilde{y}_{si}= \sum\limits_{t \in S^* \setminus S:i=\pi (t), s\neq \pi(t)}w(Sw(s,t))/w(Sw(s,\cdot)), i \in L$.

Clearly, $\sum\limits_{ t \in (S^* \setminus S)\cup L} \tilde{y}_{st} = 1 \forall s\in L$.
  Also, since for a $t \in S^*\setminus S, \pi (t)$ is unique therefore $\tilde{y}_{st} > 0$ for at most one $s \in L$. Thus $\sum_{s \in L}\tilde{y}_{st} \leq 1$. Also, for $s \in l, i= \pi (t), t \in S^* \setminus S$, $\tilde{y}_{si} > 0$  implies that
\[ \tilde{c}_{si} = w(Sw(s,\cdot))(c_{si}+\theta (i) )-f_s \leq w(Sw(s,\cdot))(c_{st}+c_{it}+ \theta (i) )-f_s 
\leq w(Sw(s,\cdot))(2c_{st}+\theta (s))) -f_s
 \]

Thus, the cost of this fractional solution 
\[
 \sum\limits_{s \in L, t \in (S^* \setminus S)\cup L} \hat{c}_{st}\tilde{y}_{st}
\]
\[ 
\leq \sum\limits_{s \in L, t \in S^* \setminus S, s=\pi (t)} [ w(Sw(s,\cdot))c_{st} + f_t- f_s] w(Sw(s,t))/w(Sw(s,\cdot))  
 \]
 \[
+ \sum\limits_{s \in L, t \in S^* \setminus S, s \neq \pi (t)}[w(Sw(s,\cdot))(2c_{st}+\theta (s)) - f_s]w(Sw(s,t)/w(Sw(s,\cdot)) 
\]
\[
\leq \sum\limits_{s \in L, t \in S^* \setminus S} w(Sw(s,\cdot))(2c_{st}+\theta (s)) ]w(Sw(s,t)/w(Sw(s,\cdot)) + \sum\limits_{t \in S^* \setminus S} f_t - \sum\limits_{s \in L }f_s
\]
\[
\leq \sum\limits_{s \in L, t \in S^* \setminus S} 2c_{st}w(Sw(s,t))  + \sum\limits_{s \in L} (U/2) \theta (s)) + \sum\limits_{t \in S^* \setminus S} f_t - \sum\limits_{s \in L }f_s
\]
\[
\leq \sum\limits_{s \in L, t \in S^* \setminus S} 2c_{st}w(Sw(s,t)) + \sum_{s \in L} v(Tr(s,\cdot)) + \sum_{s \in L} v(Pen(s)) - c_f(L) + c_f(S^* \setminus S)
\]
\[
\leq 2v(Sw(L,\cdot)) + v(Tr(L,\cdot))  + v(Pen(L)) -c_f(L) + c_f(S^* \setminus S) 
\]
 \end{proof}


When $y(s, t) = 1, $ for $t \in S^* \setminus S$, operations are defined in the same manner as in the case of heavy facilities. When $y(s, t) = 1, $ for $t \in L$, we drop $s$. The assignments are made as follows:

\begin{itemize}
\item Assign $w(Sw(s, \cdot))$ demands of $s$ to $t$. If $t$ has sufficient space to accommodate this much demand of $s$ (i.e. $N_t \le w(Sw(s, \cdot))$) we are done else let $rem(s)$ denote the amount of demand that could not be accommodated in $t$ i.e $rem(s) = N_t - w(Sw(s, \cdot))$, we need to make room to accommodate this much demand at $t$. This is done by assigning $\min \{rem(s), w(Tr(t, \cdot)) \}$ demands of $t$ to other facilities in $S$ using transfer paths. Update $rem(s)$ as $rem(s) = rem(s) - \min \{rem(s), w(Tr(t, \cdot)) \}$. If $rem(s) > 0$, we use the penalty paths to vacate space at $t$. For the $rem(s)$ demands assigned to $t$ we do the following: consider a path $P$ in $Pen_j(t)$ for some $j$ for which $Pen_j(t) > 0$. Let $t'$ be the facility just before $N^*$ on this path.  Assign $\min \{rem(s), w(Pen_j(t, t'))\}$ demands of  $t$ to $t'$ and assign the same amount ( $\min \{rem(s), w(Pen_j(t, t'))\}$) of demands of client $j$ of $t'$ to $N^*$ (i.e. pay penalty for them). Update $rem(s)$ as $rem(s) = rem(s) - \min \{rem(s), w(Pen_j(t, t'))\}$. This is repeated for all $j$ and $t'$ until we have made sufficient space in $t$.  All this can be done at a total cost at most $\hat{c}_{st}$.
\item  Assign $w(Pen(s))$ demands of $s$ to other facilities in $S$ via the penalty paths. To accommodate the demands of $s$ at these facilities, room is made available by paying penalty for some of the demands assigned to them. This is done in the same manner as explained for heavy facilities at a cost no more than $v(Pen(s))$  . 
\item Assign the remaining $w(Tr(s, \cdot))$ demands of $s$ to other facilities in $S$ via the transfer paths at a cost of $v(Tr(s, \cdot))$. 


\end{itemize}

\begin{lemma} If there is no admissible swap/delete operation then
\label{lem:light:penalty}

$c_f(L) \leq 2 v(Sw(L, .)) +  c_f (S^* \setminus S)  + 2 v(Tr(L, .)) +   2 v(Pen(L))$
\end{lemma}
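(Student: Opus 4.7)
The plan is to mirror the CW argument for light facilities, extending the accounting to include the new penalty-path contribution. The starting point is the preceding lemma, which supplies a fractional solution to the light-facility transshipment problem of cost at most $2v(Sw(L,\cdot)) + v(Tr(L,\cdot)) + v(Pen(L)) - c_f(L) + c_f(S^* \setminus S)$. Invoking the integrality of the transshipment polyhedron, I would lift this to an integral solution $y$ of no greater cost, which assigns each $s \in L$ to a unique $t_s \in (S^* \setminus S) \cup L$ with $y(s,t_s)=1$.

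Next, for each $s$ I would read off the corresponding operation: a $\mathrm{swap}(s,t_s)$ when $t_s \in S^*\setminus S$ and a $\mathrm{delete}$ of $s$ (with $t_s$ absorbing its swap-flow) when $t_s \in L$. The key bookkeeping, which is exactly the content of the bullet list preceding the statement, is that in both cases the cost change of the operation is bounded above by $\hat{c}_{s,t_s} + v(Tr(s,\cdot)) + v(Pen(s))$. For $t_s \in L$, the term $\hat{c}_{s,t_s} = w(Sw(s,\cdot))(c_{s,t_s} + \theta(t_s)) - f_s$ already absorbs the cost of evicting enough of $t_s$'s load to make room, performed first through $t_s$'s transfer paths and, when those are exhausted, through $t_s$'s penalty paths, exactly as charged by the three-case definition of $\theta(t_s)$, which satisfies $\theta(t_s)(U/2) \leq v(Tr(t_s,\cdot)) + v(Pen(t_s))$ as verified just before the previous lemma. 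The additional $v(Tr(s,\cdot))$ and $v(Pen(s))$ terms take care of the remaining $w(Tr(s,\cdot))$ and $w(Pen(s))$ demands of $s$ itself, shifted along $s$'s own transfer and penalty paths exactly as in Lemma~\ref{lem:heavy:penalty}.

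Once these per-operation bounds are in hand, the lemma follows by the usual two-sided squeeze. Because every individual swap/delete is non-admissible by hypothesis, each cost change is $\ge 0$, and since $\hat{c}_{s,t_s} + v(Tr(s,\cdot)) + v(Pen(s))$ is an upper bound on that cost change it must itself be $\ge 0$ for each $s \in L$. Summing over $s \in L$ yields $\sum_{s \in L} \hat{c}_{s,t_s} \ge -v(Tr(L,\cdot)) - v(Pen(L))$. On the other hand, by integrality plus the previous lemma, $\sum_{s \in L} \hat{c}_{s,t_s} \le 2v(Sw(L,\cdot)) + v(Tr(L,\cdot)) + v(Pen(L)) - c_f(L) + c_f(S^* \setminus S)$. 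Combining the two inequalities and rearranging isolates $c_f(L)$ on the left to give the stated bound.

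The main obstacle I expect is the second bullet of the per-operation bound: verifying cleanly that the eviction performed at $t_s \in L$ can be charged to $\hat{c}_{s,t_s}$ alone, without double-counting $t_s$'s own transfer and penalty resources against $t_s$'s entry in the sum $\sum_{s\in L}$. The careful case analysis in the definition of $\theta$, together with the constraint (2) of the light-facility transshipment LP that forces distinct $s$ to be matched to distinct $t_s \in L$, makes the accounting consistent; once that is observed the rest reduces to the algebra above.
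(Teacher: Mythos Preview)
Your proposal is correct and follows essentially the same route as the paper: take the fractional bound from the preceding lemma, pass to an integral solution, interpret each $y(s,t_s)=1$ as a swap or delete with cost change at most $\hat c_{s,t_s}+v(Tr(s,\cdot))+v(Pen(s))$, use non-admissibility to make each such expression nonnegative, sum, and combine with the fractional upper bound.

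One small correction to your closing paragraph: constraint (2) of the light-facility LP reads $\sum_{s\in L} y_{st}\le 1$ only for $t\in S^*\setminus S$; there is no such constraint for $t\in L$, so nothing in the LP forces distinct $s$ to be matched to distinct $t_s\in L$. Fortunately the double-counting worry you raise is not a real obstacle anyway. Each inequality $\hat c_{s,t_s}+v(Tr(s,\cdot))+v(Pen(s))\ge 0$ comes from a \emph{separate} hypothetical operation applied to the \emph{current} solution $S$; the operations are never performed simultaneously, so there is no conflict if the eviction at $t_s$ (charged through $\theta(t_s)$ inside $\hat c_{s,t_s}$) uses the same transfer/penalty paths of $t_s$ that also appear in $t_s$'s own summand. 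Summing valid inequalities is always legitimate regardless of such overlap.
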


Adding the results of Lemmas~\ref{lem:heavy:penalty} and~\ref{lem:light:penalty} we get : 
 \begin{center}

$c_f(S\setminus S^*)\leq 3c_f(S^*\setminus S) + 2(c_s(S)+c_p(S) +c_s(S^*)+ c_p(S^*))$ 
\end{center} 

Adding $c_s(S) +c_f(S \cap S^*) $ to both sides and using the bound on the sum of service cost and penalty cost (Lemma~\ref{lem-servp}), we get 
\begin{center}

$c(S) \leq 6c_f(S^*)+ 5c_s(S^*)+5 c_p(S^*)$
\end{center}

\section{Non-uniform CapFLPP}
In this section, we consider the facility location problem with penalties when the capacities are not necessarily uniform. We first briefly review the local search algorithm of Pal-Tardo-Wexler\cite{paltree} and then show how it can be extended to integrate penalties. We add a new operation $delete(s)$. 

\subsection{Preliminaries and Previous Work}
As is true for the uniform capacity case, for a given subset $S\subseteq F$, the optimal assignment of clients to the set of facilities in $S$ can be computed by solving a mincost flow problem. 
Pal-Tardos-Wexler~l\cite{paltree} suggested a local search algorithm to find a good approximate solution for non-uniform case  (without penalties). Starting with a feasible solution $S$ the following operations are performed to improve the solution if possible.

\begin{itemize}

\item \textbf{add(t):} $S \gets S \cup \{t\}, t \in F$.

\item \textbf{open(t,T):}  $S\gets S\cup \{t\} \setminus T, t \in F, T\subseteq S  \setminus \{t\}$. In this operation a facility $t \in F$ is added and a subset of facilities $T\subseteq S  \setminus \{t\}$ is closed. All the demands served by the facilities in $T$ are simply reassigned to $t$.

Any demand served by $s' \in S \setminus T$ is not affected by this operation. The operation can be performed in polynomial time by solving a knapsack problem approximately.

\item \textbf{close(s,T):} $S\gets S\cup T \setminus \{s\}, s\in S, T\subseteq F\setminus \{s\}$. In this operation a facility $s\in S$ is closed and a subset of facilities $T$ is opened. An estimated cost (which is an upper bound to the actual cost) for the operation is computed, in which it is assumed that a client which was assigned to $s$ in the solution $S$ will now be assigned to some $t\in T$ and that this reassignment costs  at most $c_{st}$. 

Any demand served by $s' \in S \setminus \{s\}$ is not affected by this operation.
The operation can be performed in polynomial time by solving a covering knapsack problem approximately.
\end{itemize}

When a facility already opened, is opened, we only use its free capacity. Add operation provides  a bound on the service cost as in Lemma~\ref{lem-serv}.

To bound the facility costs, they use path decomposition, swap graphs  and the integrality of the transshipment polyhedra to define the valid operations. Following transshipment problem is set up: we require the flow out of every facility $s \in S\setminus S^*$ to be exactly $x(s,\cdot)$ and a flow entering a facility $t \in S^*$ to be at most the free capacity of $t$, $u_t - x(t,\cdot)$. 

minimize  $\sum\limits_{s \in S \setminus S^*, t \in S^*} c_{st} y(s,t)$\\
  
  subject to
  
$\sum\limits_{t \in S^*} y(s,t) = x(s,\cdot)$  $\forall s \in S \setminus S^*$ \\

$\sum\limits_{s \in S \setminus S^*} y(s,t)  \leq  u_t-x(t,\cdot)$  $ \forall t\in S^*$\\

$y(s,t) \geq 0 $  $\forall s \in S \setminus S^*, t \in S^*$  \\

\begin{lemma}[\cite{paltree}]
There exists a feasible flow whose cost is no more than $c_s(S) + c_s(S^*)$.
\end{lemma}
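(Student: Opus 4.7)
The plan is to exhibit the desired feasible flow $y$ by decomposing $x-x^*$ in the bipartite facility-client graph, mirroring the standard path-decomposition used in the uniform-capacity analysis. I view $x-x^*$ as an edge flow on the bipartite graph, where a negative value on an $(s,j)$ edge is interpreted as flow in the reverse direction from $j$ to $s$. Because both $x$ and $x^*$ serve exactly $d_j$ at each client, the net flow at every client vertex is zero, so standard path stripping decomposes the flow into a collection of paths and cycles.

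Next I identify the endpoints of these paths. At $s \in S \setminus S^*$ we have $x^*(s,\cdot)=0$, so the net out-flow is exactly $x(s,\cdot)$; at $t \in S^* \setminus S$ we have $x(t,\cdot)=0$, so the net in-flow is $x^*(t,\cdot)$; at a vertex in $S \cap S^*$ the sign depends on which of $x$ or $x^*$ sends more. Thus every path starts at some facility in $S$ and ends at some facility in $S^*$. I then define $y(s,t)$, for $s \in S\setminus S^*$ and $t \in S^*$, to be the total weight of paths whose start vertex is $s$ and end vertex is $t$; paths whose start vertex lies in $S\cap S^*$ (if any) are simply discarded, since the transshipment imposes no constraint on them.

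Feasibility of $y$ is then routine to check. For each $s \in S\setminus S^*$, all of the net out-flow $x(s,\cdot)$ at $s$ is carried by paths beginning at $s$, so $\sum_{t\in S^*} y(s,t)=x(s,\cdot)$, matching the equality constraint. For each $t \in S^*$, the total weight of paths ending at $t$ equals its net in-flow $\max\{0,\, x^*(t,\cdot)-x(t,\cdot)\}$, which is at most $x^*(t,\cdot)-x(t,\cdot)\le u_t - x(t,\cdot)$ because $x^*$ is capacity-feasible; restricting to paths originating in $S\setminus S^*$ only tightens this, so the capacity constraint also holds.

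Finally, I bound the cost of $y$. Each path $p$ from $s$ to $t$ alternates facility-client edges, and the metric property of $c$ together with the triangle inequality gives $c_{st}\le c(p)$, the sum of $c_{ij}$ over the edges traversed. Summing over paths, $\sum_{s,t} c_{st}\,y(s,t)\le \sum_p w(p)\,c(p)\le \sum_{s,j}|x(s,j)-x^*(s,j)|\,c_{sj}\le c_s(S)+c_s(S^*)$, which is the bound claimed. I do not anticipate a genuine obstacle; the only mild subtlety is confirming that discarding paths rooted in $S\cap S^*$ is harmless, and this is exactly why the transshipment is phrased as an equality only on $S\setminus S^*$ and as an inequality on $S^*$.
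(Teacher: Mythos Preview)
Your proposal is correct and follows exactly the approach the paper intends: the paper's proof is the single line ``Path decomposition provides a flow whose cost is no more than $c_s(S)+c_s(S^*)$,'' and you have simply unpacked that line, defining $y(s,t)$ as the total weight of $(s,t)$-paths in the decomposition of $x-x^*$, verifying the two constraints, and bounding the cost via the triangle inequality and $\sum_p w(p)c(p)\le c_s(S)+c_s(S^*)$. The only remark is that your observation about discarding paths that start in $S\cap S^*$ is fine (such vertices can indeed act as sources when $x(t,\cdot)>x^*(t,\cdot)$), and this is precisely accommodated by the asymmetry between the equality constraint on $S\setminus S^*$ and the inequality on $S^*$, as you note.
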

\begin{proof}
Path decomposition provides a flow whose cost is no more than $c_s(S) + c_s(S^*)$.
\end{proof}

Consider the minimum cost flow of the above transshipment problem. Remove cycles, if any (of zero cost) by augmenting flow along it. 
Consider the forest  formed by the edges with nonzero flow in the flow $y$ so obtained. Consider each tree $T$ rooted at some vertex in $S^*$. For a facility $t$,  let $C(t)$ denote the children of $t$. For a non-leaf vertex $t \in S^*$,  let $T_t$ be the subtree, rooted at $t$, containing all children and grandchildren of $t$. Thus height of $T_t$ is at most $2$. A facility $s \in C(t)$ is said to be {\em heavy} if $y(s,t) > y(\cdot,t)/2$ and it is \textit{light} otherwise. A light facility $s$ is \textit{dominant} if $y(s,t) \geq y(s,\cdot)/2$ and otherwise it is said to be \textit{non-dominant}. The set $C(t)$ is partitioned into three sets:  the set $H$ of heavy facilities , the set $LND$ of light non-dominant facilities  and the set $LD$ of light dominant facilities.  Following set of operations are defined for each subtree $T_t$:

\begin{itemize}
\item When all children of $t$ are leaves  ( i.e. $T_t$ is of height $1$),  consider $open( t, C(t))$. Otherwise,
\item For $s \in H$, consider $close(s, C(s) \cup\{t\})$.
\item For facilities in $LD$,
          \begin{itemize}
          \item  if $H \neq \phi$, consider $open(t, LD)$ (since $H \neq \phi$, all the demands of all the facilities in LD can be assigned to $t$ without exceeding its capacity) else
          \item $LD$ is divided into at most three sets $LD_1, LD_2$, and $LD_3$ such that the operations $open(t, LD_i)$ are feasible 
for $ i = 1, 2$ and $3$.
          \end{itemize}
 \item Facilities $s$ in $LND$ are arranged in the increasing order of their $y(s, t)$ values. i.e. $y(s_1, t) \le y(s_2, t) \le \ldots y(s_k, t)$. For $ 1 \le i < k$, consider $close(s_i, C(s_i) \cup C(s_{i+1})$;  for $s_k$, consider $close(s_k, C(s_k) \cup t)$.
 \end{itemize}

The above operations ensure that a facility in $S^*$ is opened at most $6$ times and flow edges are used at most twice. Hence we get the following:

$c_f(S\setminus S^*) \leq 6c_f(S^* \setminus S) + 2\alpha - c(S\setminus S^*)$, 
where $\alpha$ is the cost of the optimal transshipment. Adding $c_s(S) +c_f(S \cap S^*) $ to both sides, using the fact that $\alpha \leq c_s(S) + c_s(S^*)$ and  the bound on the service cost, we get 
\begin{center}
\vspace{-.3cm}
$c(S) \leq 9c_f(S^*)+ 5c_s(S^*) $ 
\end{center}
\subsection{Local Search Algorithm for CapFLPP}

To incorporate penalties we modify the operations of PTW as follows:
\begin{itemize}

\item \textbf{add(t):} $S \gets S \cup \{t\}, t \in F$. MCFP is used instead of MCF to assign the clients optimally

\item \textbf{delete(s):} $S \gets S \setminus \{s\}, s \in S$. MCFP is used to assign the clients optimally.

\item \textbf{open(t,T):}  This operation is same as that in PTW.
\item \textbf{close(s,T):} Close one facility $s \in S$ and open a set of facilities $T \subseteq F \setminus \{s\}$ i.e $S \gets S \setminus \{s\} \cup T$. Here we reassign some of the demands served by $s$ to a facility in $T$, some demands are assigned to some other facilities $s' \in S$ and space is made available at these facilities by paying penalty for these many demands served by $s'$ and some demands of $s$ pay penalty ( this actually corresponds to the case when $s' =s$). 

\end{itemize}
 
 \begin{lemma}[\cite{paltree}]
 Open(t, T)  operation can be performed in polynomial time when the demands and capacities are small integers.
 \end{lemma}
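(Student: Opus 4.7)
The plan is to reduce the selection of the subset $T$ in the move $\text{open}(t,T)$ to a $0/1$ knapsack instance whose capacity is polynomially bounded, and then apply the standard pseudo-polynomial dynamic program.

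Fix $t \in F$. For each candidate $s \in S \setminus \{t\}$, I would assign a \emph{size} $w_s := x(s,\cdot)$, the total demand currently served by $s$ in $(S,x)$, and a \emph{gain} $g_s := f_s - \sum_{j} x(s,j)(c_{tj} - c_{sj})$, the net decrease in the objective produced by closing $s$ and rerouting its demand to $t$. The move $\text{open}(t,T)$ is feasible exactly when $\sum_{s\in T} w_s \le B$, where $B := u_t - x(t,\cdot)$ if $t \in S$ and $B := u_t$ otherwise; its total improvement equals $\sum_{s\in T} g_s$ minus an additive constant ($f_t$ if $t \notin S$, else $0$) that does not depend on $T$. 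Hence choosing the best $T$ is precisely a $0/1$ knapsack problem over the items $\{(w_s, g_s) : s \in S \setminus \{t\}\}$ with capacity $B$; items with $g_s \le 0$ may be discarded up front since including them can never improve the objective.

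Under the hypothesis that all $d_j$ and $u_i$ are small (i.e.\ polynomially bounded) integers, and using the integer optimum of MCF for $x$ that exists by integrality of min-cost flow with integer data, both each $w_s$ and the capacity $B$ are polynomial integers. The textbook Bellman dynamic program filling a table $K$ of dimension $|S| \times B$ via
\[ K[i,b] = \max\{K[i-1,b],\ K[i-1,b-w_{s_i}] + g_{s_i}\} \]
then runs in time $O(|S| \cdot B)$, which is polynomial in the input size. Enumerating over all choices of $t \in F$ and retaining the best improvement implements one $\text{open}$ move in polynomial time.

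The only real subtlety is verifying that the per-item gain $g_s$ faithfully captures the total cost change when several facilities in $T$ are closed simultaneously. This holds because every $s \in T$ simply redirects all its demand to $t$, which has enough residual capacity by the knapsack feasibility constraint; the interactions across distinct $s$ therefore decouple, and the total improvement decomposes exactly as $\sum_{s \in T} g_s$. Consequently the knapsack is an exact, not merely approximate, reformulation of the subproblem, and under the small-integer hypothesis its pseudo-polynomial dynamic program produces the optimal $T$ in polynomial time, establishing the lemma.
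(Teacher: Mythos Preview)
Your proposal is correct and matches the paper's approach: the paper does not give its own proof but cites Pal--Tardos--Wexler, having already remarked that $\text{open}(t,T)$ reduces to a knapsack instance, which under the small-integer hypothesis is solved exactly by the standard pseudo-polynomial dynamic program you describe. Your write-up simply fleshes out that reduction in detail.
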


For \textbf{$close(s,T)$}: We first use a rough upper bound on the cost of rerouting demands from $s$ to $T$. For a client $j$ the change in service cost by rerouting it from $s$ to $t \in T$ is $c_{tj}-c_{sj}$ which we estimate by $c_{st}$, which is an upper bound on the change by using the triangle inequality (this is same as the estimated cost used in PTW). Simialrly, the change in service cost by rerouting a unit demand from $s$ to $s' \in S$ is estimated by $c_{s s'}$.  Further, a unit demand of client $j$ served by $s'$ pays a penalty of $p_j$ to make space for a unit demand of $s$. When some client $j$ pays penalty to make space for a unit demand of $s$, we say that we pay penalty for the unit demand of $s$ ( note that this penalty is being paid indrectly by the clients of $s'$ which could be $s$ itself at times). Let this amount be denoted by $r$. Since we don't know $r$, we make a guess for it. Then we say that the operation is feasible if the total free capacity of the facilities in $T$ is at least $x(s, .) - r$.
We can then view the problem as a special case of facility location problem in which there is only one client to serve with total demand $x(s, .) - r$.

 If we decide to pay the penalty for a client $j$ then the change in cost for a demand of that client is $p_j-c_{sj}$. We notice that this change is independent of the set $T$. We know that $p_j - c_{sj}$ are all positive numbers for if $p_{j} < c_{sj}$ our solution wouldn't be serving these clients in the first place. 

\begin{lemma}[\cite{paltree}]
The special case of facility location problem with a single client $j$ with total demand $d_j$ can be solved in time polynomial in $d_j$ and the number of facilities $n$.
\end{lemma}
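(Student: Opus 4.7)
The plan is to solve this single-client instance by dynamic programming over the pair (facility index, amount of demand covered). Since the client has integer demand $d_j$ and each facility has integer capacity $u_i$, the quantity of demand served by any chosen facility lies in $\{0,1,\dots,u_i\}$, so the state space is discrete and small. I would define
\[
A[i,d] \;=\; \text{minimum cost of serving exactly } d \text{ units of } j \text{ using only facilities } \{1,\dots,i\},
\]
where ``cost'' includes both the opening cost of each used facility and the linear service cost $c_{ij}$ per unit. The optimum of the instance is then $A[n, d_j]$.

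The recurrence I would use is
\[
A[i,d] \;=\; \min \Bigl\{ A[i-1,d],\ \min_{1 \le u' \le \min(u_i,d)} \bigl( A[i-1,d-u'] + f_i + u' \, c_{ij} \bigr) \Bigr\},
\]
with base cases $A[0,0]=0$ and $A[0,d]=+\infty$ for $d\ge 1$. The first branch corresponds to leaving facility $i$ closed, and the inner minimization decides how many of its $u_i$ units to use if it is opened (the fixed cost $f_i$ is charged exactly once, when $u'\ge 1$). Correctness is immediate by induction on $i$: any feasible assignment of $d$ units from $\{1,\dots,i\}$ either uses no units from facility $i$ (captured by the first branch) or uses some $u' \ge 1$ units from it (captured by the second). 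The table has $n(d_j+1)$ entries and each entry is computed in time $O(d_j)$, so the total running time is $O(n\, d_j^{\,2})$, which is polynomial in $n$ and $d_j$ as required.

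The only mild subtlety, and the step I would be most careful about, is the separation between the one-time opening charge $f_i$ and the linear per-unit service cost: a naive formulation that amortizes $f_i$ per unit would double count when the same facility is considered in adjacent DP transitions. Guarding $f_i$ with the condition $u'\ge 1$ in a single min over $u'\in\{0,\dots,\min(u_i,d)\}$ (and keeping the ``don't open'' case as the explicit $A[i-1,d]$ branch) resolves this cleanly. Finally, to recover the actual set of opened facilities and their assigned amounts (needed to execute a $\mathit{close}(s,T)$ move), one would keep a back-pointer at each cell indicating which branch achieved the minimum; this adds no more than a constant factor and preserves the polynomial bound.
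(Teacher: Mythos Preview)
Your proof is correct. The paper does not supply its own argument for this lemma but simply cites Pal--Tardos--Wexler, where the single-client instance arising in the $\mathit{close}(s,T)$ operation is treated as a covering-knapsack-type problem and solved by a pseudo-polynomial dynamic program; your DP over pairs (facility index, units of demand covered) with running time $O(n\,d_j^{\,2})$ is exactly that approach, so there is nothing substantive to contrast.
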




Since a unit demand of every client $j$ of every facility $s'$ is a candidate to pay penalty for one unit of demand of $s$ when $s$ is closed, we compute the quantity $c_{s s'} + p_j$ for all client $j$ of every facility $s' \in S$. We order the demands in the increasing order of  this quantity and select the demands that add up to  $r$, in that order, to pay the penalty on behalf of $r$ units of demands served by $s$. Suppose $r_{s' j}$ demands of client $j$ served by $s'$ are selected, then re-assign $r_{s' j}$ demands of $s$ to $s'$ and pay penalty for the same amount of demand of $j$.

Since there are $d_j$ choices for $r$, we repeat the above process $d_j$ number of times and then choose the solution with the minimum cost. Thus {\em close} operation can be performed in time polynomial in $d_j$ and the number of facilities $n$. Can we get an approximate solution in strongly polynoimal time?

\subsection{Bounding the costs for CapFLPP}

As before, add operation bounds the sum of service cost and the penalty cost as given in Lemma 6.
To bound the facility costs, we use the path decomposition as described in section~\ref{algo-unif}. However, to define the valid operations we setup the following transshipment problem.
 We make $nm$ copies of $N^*$, one for each client and a facility. Let $N^*_{s,j}$ denote a copy of $N^*$ corresponding to client $j$ and facility $s \in S \setminus S^*$.  Let $y(s,t)$ denote the flow from $s$ to $t \in S^*$ and $y(s, N^*_{s',j})$ denote the flow going from $s$ to $N^*_{s',j}$ through client $j$.
 We now formulate the problem as follows:\\
 
  minimize  $\sum\limits_{s \in S \setminus S^*, t \in S^*} c_{st} y(s,t) + \sum\limits_{s, s' \in S \setminus S^* , j} (c_{s s'} + p_j) y(s,N^*_{s',j})$\\
  
  subject to
  
$\sum\limits_{t \in S^*} y(s,t) + \sum_{s' \in S \setminus S', j} y(s,N^*_{s', j})= x(s,\cdot)$  $\forall s \in S \setminus S^*$ \\

$\sum\limits_{s \in S \setminus S^*} y(s,t)  \leq  u_t-x(t,\cdot)$  $ \forall t\in S^*$\\

$y(s,t) \geq 0 $  $\forall s \in S \setminus S^*, t \in S^*$  \\

$ 0 \le y(s,N^*_{s', j}) \le w(Pen_j(s, s'))$  $\forall j \in C, s, s' \in S \setminus S^*$ \\


Following lemma shows that there is a low cost feasible flow.

\begin{lemma}
There is a flow with cost at most $c_s(S)+c_p(S)+c_s(S^*)+c_p(S^*)$.
\end{lemma}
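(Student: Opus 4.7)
The plan is to build the required flow $y$ directly from the path decomposition of $x-x^*$ introduced in Section 2. After eliminating the zero-cost cycles, every path starts at a vertex in $S\cup\{N\}$ and ends at a vertex in $S^*\cup\{N^*\}$, and neither $N$ nor $N^*$ appears as an internal vertex. I would discard the paths starting at $N$ (they do not correspond to sources of the transshipment) and classify the remaining paths, all starting at some $s\in S\setminus S^*$, into swap paths (ending in $S^*\setminus S$), transfer paths (ending in $S\cap S^*$), and penalty paths (ending at $N^*$). Each penalty path has the form $s\to j_1\to s_1\to\cdots\to s'\to j\to N^*$, where $s'\in S$ is the penultimate facility and $j$ the terminal client; these are precisely the labels that index the copy $N^*_{s',j}$.

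Given this classification, I would set $y$ as follows: for each swap or transfer path $P$ from $s$ to $t$, add $w(P)$ to $y(s,t)$; for each penalty path $P$ from $s$ through client $j$ of $s'$ to $N^*$, add $w(P)$ to $y(s,N^*_{s',j})$. Feasibility is then immediate. The total weight of paths leaving $s\in S\setminus S^*$ equals the net outflow of $s$ in $x-x^*$, namely $x(s,\cdot)-x^*(s,\cdot)=x(s,\cdot)$, so the first equality constraint holds. The total weight of paths entering any $t\in S^*$ is at most the net inflow $x^*(t,\cdot)-x(t,\cdot)\le u_t-x(t,\cdot)$, so the capacity constraint holds. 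Finally, $y(s,N^*_{s',j})$ aggregates to $w(Pen_j(s,s'))$ by construction, so the per-copy bound is met with equality.

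For the cost bound I would charge each path's contribution against its physical cost $c(P)$. On a swap or transfer path from $s$ to $t$, repeated triangle inequality along the alternating facility-client path gives $c_{st}\le c(P)$, which is exactly the per-unit coefficient in the objective. For a penalty path, I would split it at the penultimate facility $s'$: the prefix from $s$ to $s'$ is an alternating path whose total cost is at least $c_{ss'}$ by triangle inequality, the edge $s'\to j$ contributes a nonnegative amount, and the edge $j\to N^*$ contributes exactly $p_j$, so $c(P)\ge c_{ss'}+p_j$, again the per-unit coefficient in the objective. Summing over all paths and invoking the bound $\sum_P w(P)c(P)\le c_s(S)+c_p(S)+c_s(S^*)+c_p(S^*)$ recorded just after the path decomposition yields the claim.

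The delicate point is the handling of penalty paths: the coefficient $c_{ss'}+p_j$ refers to quantities determined by the structural labels of the copy $N^*_{s',j}$ rather than to a single edge along the path, so one must identify $s'$ and $j$ unambiguously as the penultimate facility and terminal client, and then apply triangle inequality on the facility-to-facility prefix separately from the explicit penalty edge. Everything else is a routine bookkeeping exercise on top of the decomposition already used in the uniform-capacity case.
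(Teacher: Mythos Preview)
Your proof is correct and follows essentially the same route as the paper: you construct the flow from the path decomposition by setting $y(s,t)=w(P(s,t))$ for $t\in S^*$ and $y(s,N^*_{s',j})=w(Pen_j(s,s'))$, verify feasibility via the net outflow/inflow of $x-x^*$, and bound each path's contribution to the objective using triangle inequality on the $s$-to-$s'$ prefix together with the terminal penalty $p_j$. Your treatment of the penalty paths is in fact slightly more explicit than the paper's, which simply remarks that triangle inequality is applied on the portion from $s$ to $s'$ while $p_j$ sits outside it; otherwise the arguments coincide.
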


\begin{proof}: We use the path decomposition described in section~\ref{algo-unif}  to construct such a flow. For a pair of facilities $s \in S \setminus S^*$ and $t \in S^* $, set $y(s,t)=  w(P(s,t)) = \sum_{p \in P(s,t)} w(p)$ and $y (s,N^*_{s',j}) = w(Pen_j(s, s')) = \sum_{p \in Pen_j(s, s')} w(p)$. The total flow leaving every vertex $s \in S \setminus S^*$ is then $x(s,\cdot)$. Also, total flow entering any vertex $t \in S^*$ is at most $max\{ x^*(t,\cdot)-x(t,\cdot), 0 \}$ which is at most $u_{t}-x(t,\cdot)$. Thus,  $y$ defines a feasible flow. 

 The cost of flow on an $(s,t)  (t \in S^*)$ edge is $c_{st} y(s,t)$. By triangle inequality $c_{st} \leq c(p)$ for every path $p \in P(s,t)$. Next, consider the edge $(s,N^*_{s', j})$. The cost of flow on this edge is $ (c_{s s'} + p_j) y(s, N^*_{s', j})$. Again by triangle inequality, for every path  $p \in Pen_j(s, s'), c_{s s'} + p_j \leq c(p)$ (note that triangle inequality is applied on the path from $s$ to $s'$ and $p_j$ does not particiapte in it). 
 The cost of the solution is then bounded by\\
 $ \sum\limits_{s \in S \setminus S^*\, t \in S^*} c_{st}y(s,t)  + \sum\limits_{s, s' \in S \setminus S^*,j} (c_{s s'} + p_j) y(s, N^*_{s',j})$
   $\leq    \sum_{p \in P} c(p)w(p)$
 $\leq c_s(S)+c_p(S)+c_s(S^*)+c_p(S^*)$ 
\end{proof}


Consider the graph of edges with a positive flow in the optimal solution of the above transshipment problem. If the graph contains cycles involving $N^*_{s, j}$, break them by making further copies of $N_{s, j}$ (total of at most $n$). To avoid complicating the notation further we will simply call them copies of $N_{s, j}$, rest will be clear from the context. Other cycles can be eliminated by augmenting flow along them as the cycles are of cost zero.

 
  
  






\textbf{Bounding Facility Costs}:  To bound the facility costs, we will use the same operations as used by Pal-Tardos-Wexler. 
 Consider the forest  formed by the edges with nonzero flow in the flow $y$ obtained after removing cycles. Consider each tree $T$ rooted at some vertex in $S^*$. For a facility $t$,  let $C(t)$ denote the children of $t$ and $\hat{C}(t)$ denote the set of children of $t$ minus the copies of $N^*$. For a non-leaf vertex $t \in S^*$,  let $T_t$ be the subtree, rooted at $t$, containing all children and grandchildren of $t$. 
For a facility $t \in S^*$, {\em heavy}  and {\em light} facilities are defined as in PTW.  A facility $s \in S \setminus S^*$ is \textit{dominant} if $y(s,t) \geq \frac{1}{2} y(s,\cdot)$ and otherwise it is said to be \textit{non-dominant}. A facility is {\em light dominant} if it is {\em light} and {\em dominant}. Similarly, A facility is {\em light non-dominant} if it is {\em light} and {\em non-dominant}.

In what follows, when we say assign $y(s, N^*_{s', j})$ demand of $s$ to $N^*_{s', j}$ we mean the following: consider a path $P$ in $Pen_j(s, s')$.  Assign $y(s, N^*_{s', j})$ demands of  $s$ to $s'$ and assign the same amount ($y(s, N^*_{s', j})$) of demands of client $j$ served by $s'$ to $N^*$ (i.e. penalty is paid by client $j$ served by $s'$ for this much demand).  Note that this is feasible since  $y(s, N^*_{s', j}) \le w(Pen_j(s, s'))$.

When $t \ne N^*_{s,j}$ for any $s$ and $j$, following set of operations are used:
\begin{itemize}
\item When all children of $t$ are leaves  ( i.e. $T_t$ is of height $1$), consider $open( t, C(t))$,  otherwise,
\item For $s \in H$, consider $close(s, \hat{C}(s)  \cup\{t\})$.  Assign $y(s, o)$ demand served by $s$ to $o$ for $o \in \hat{C}(s) \cup \{t\}$ and $y(s, N^*_{s', j})$ demand to $N^*_{s', j}$,
for all $j$ and for all $s'$.

\item For facilities in $LD$, 
          \begin{itemize}
          \item  if $H \neq \phi$, consider $open(t, LD)$. Assign $\sum_{o \in C(s) \cup t} y(s, o)$ demand served by $s$ (in $LD$) to $t$, else
          \item $LD$ is divided into at most three sets $LD_1, LD_2$, and $LD_3$ such that the operations $open(t, LD_i)$ are feasible 
for $ i = 1, 2$ and $3$. Assign $\sum_{o \in C(s) \cup t} y(s, o)$ demand served by $s$ (in $LD_i$) to $t$.
          \end{itemize} 
 \item Facilities in $LND$ are arranged in the increasing order of their $y(s, t)$ values. i.e. $y(s_1, t) \le y(s_2, t) \le \ldots y(s_k, t)$. For $ 1 \le i < k$, consider $close(s_i, \hat{C}(s_i) \cup \hat{C}(s_{i+1}) ))$ and for $s_k$, consider $close(s_k, \hat{C}(s_k) \cup t )))$;   Assign $y(s_i, o)$ demand served by $s$ to $o$ for $o \in \hat{C}(s_i)$, assign $y(s_i, N^*_{s', j})$ demand to $N^*_{s', j}$ for all $s'$ and $j$.  For $i = k$, $y(s_i, t)$ demands are reassigned  to $t$ itself but assignment is a little tricky for $i < k$. For $i < k$, some of  $y(s_i, t)$ demands are reassigned to the facilities in $\hat{C}(s_{i+1}))$ and the rest are assigned to $s_{i+1}$. Space is vacated at $s_{i+1}$ by assigning $y(s_{i+1}, N^*_{s', j})$ demand to $N^*_{s', j}$ for all $s'$ and $j$.
 

 \end{itemize}
 
When $t = N^*_{s', j}$ for some $s'$ and $j$, following operations are used:
\begin{itemize}
\item When all children of $t$ are leaves  ( i.e. $T_t$ is of height $1$) , consider $delete(s)$ for all $s \in C(t)$ and assign $y(s, N^*_{s', j})$ demand to $N^*_{s', j}$, else


\item  consider $close(s, \hat{C}(s))$, assign $y(s, o)$ demand of $s$ to $o$ for $o \in \hat{C}(s)$ and $y(s, N^*_{\hat{s},j})$ demand to $N^*_{\hat{s},j}$,
 for all $j$ and all $\hat{s}$.
 \end{itemize}


In the above operations a facility is opened at most thrice as a parent for the facilities in $H \cup LD$,  at most once as a parent of a facility in LND, and at most twice as a child of a LND facility. Note that a facility that is a grand child of a copy of $N^*$ opens at most once as a child. Thus, a facility in $S^*$ is opened at most $6$ times. $(s, t)$ and $(s, N^*_{s, j})$ edges are used at most twice in a similar manner as in PTW.  
Hence we get the following:

$c_f(S \setminus S^*) \leq 6c_f(S^* \setminus S) + 2\alpha' - c(S\setminus S^*)$, where $\alpha'$ is the cost of the optimal transshipment. Adding $c_s(S) + c_p(S) + c_f(S \cap S^*) $ to both sides, using the fact that $\alpha' \leq c_s(S) + c_p(S) + c_s(S^*) + c_p(S^*)$ and  the bound on the sum of service cost and penalty costs, we get 

\begin{center}
$c(S) \leq 9c_f(S^*)+ 5c_s(S^*) + 5 c_p(S^*)$ 
\end{center}

In order to guarantee polynomial running time, we insist on operations that improve the cost by at least $c(S)/p(n,\epsilon)$. To drop the assumption of demands, capacities and profits to be "small" integers, we solve the knapsack problems approximately and we rather insist on operations that improve the cost by at least $c(S)/ (2 p(n,\epsilon))$. 
 We get the weaker bounds of $(6 + \epsilon)$ for UnifFLPP which improves to $(5.83 + \epsilon)$ after scaling and   $(9 + \epsilon)$ for CapFLPP which improves to $(8.532 + \epsilon)$ after scaling. Hence the Theorems~\ref{theo-unif} and~\ref{theo-nonunif}.



\end{document}